\numberwithin{equation}{section} 
\theoremstyle{definition}
\newtheorem{proposition}{Proposition}
\newtheorem{definition}{Definition}
\newtheorem{remark}{Remark}
\newtheorem{theorem}{Theorem}
\newtheorem{lemma}{Lemma}
\newtheorem{corollary}{Corollary}
\newcommand{\hA}{\mathcal{A}}
\newcommand{\hB}{\mathcal{B}}
\newcommand{\hE}{\mathcal{E}}
\newcommand{\hI}{\mathcal{I}} 
\newcommand{\hJ}{\mathcal{J}}
\newcommand{\hM}{\mathcal{M}}
\newcommand{\hR}{\mathcal{R}}
\newcommand{\hS}{\mathcal{S}}
\newcommand{\sfa}{\mathsf{A}}
\newcommand{\sfb}{\mathsf{B}}
\newcommand{\sfe}{\mathsf{E}}
\newcommand{\sff}{\mathsf{F}}
\newcommand{\sfg}{\mathsf{G}}
\newcommand{\sfm}{\mathsf{M}}
\newcommand{\sfp}{\mathsf{P}}
\newcommand{\sfq}{\mathsf{Q}}
\newcommand{\sfz}{\mathsf{Z}}
\newcommand{\sft}{\mathsf{T}}
\newcommand{\sfh}{\mathsf{H}}
\newcommand{\N}{\mathbb N} 
\newcommand{\R}{\mathbb R} 
\newcommand{\Z}{\mathbb Z} 
\newcommand{\fii}{\varphi} 
\newcommand{\Om}{\Omega} 
\newcommand{\hi}{\mathcal{H}} 
\newcommand{\ki}{\mathcal{K}} 
\newcommand{\li}{\mathcal{L}} 
\newcommand{\id}{I} 
\newcommand{\lh}{\mathcal{L(H)}} 
\renewcommand{\th}{\mathcal{T(H)}} 
\newcommand{\sh}{\mathcal{S(H)}} 
\newcommand{\eh}{\mathcal{E(H)}} 
\newcommand{\oh}{\mathcal{O(H)}} 
\newcommand{\ph}{\mathcal{P(H)}} 
\newcommand{\tr}[1]{\mathrm{tr}\left[#1\right]} 
\def\<{\langle} 
\def\>{\rangle} 
\newcommand{\kb}[2]{|#1 \rangle\langle #2|} 
\newcommand{\ip}[2]{\left\langle #1 | #2 \right\rangle} 
\newcommand{\no}[1]{\left\|#1\right\|} 
\newcommand{\Ao}{\mathsf{A}} 
\newcommand{\Eo}{\mathsf{E}} 
\newcommand{\Fo}{\mathsf{F}} 
\newcommand{\Qo}{\mathsf{Q}} 
\newcommand{\No}{\mathsf{N}} 
\newcommand{\bo}[1]{\mathcal{B}\left(#1\right)} 
\newcommand{\br}{\mathcal B(\mathbb R)} 
\begin{document}

\title[Complementary observables]{Complementary observables \\ {\small  in quantum mechanics}}
\author{Jukka Kiukas }
\address{Department of Mathematics, Aberystwyth University, Aberystwyth SY23 3BZ, United Kingdom}
\email{jek20@aber.ac.uk}

\author{Pekka Lahti }
\address{Department of Physics and Astronomy, University of Turku, Turku, Finland}
\email{pekka.lahti@utu.fi}

\author{Juha-Pekka Pellonp\"a\"a }
\address{Department of Physics and Astronomy, University of Turku, Turku, Finland}
\email{juhpello@utu.fi}

\author{Kari Ylinen }
\address{Department of Mathematics and Statistics, University of Turku, Turku, Finland}
\email{ylinen@utu.fi}
\maketitle

\thispagestyle{empty}

\centerline{To the memory of Paul Busch, our friend and colleague}

\vskip .2truecm

\noindent
{\em \small One may view the world with the p-eye and one may view it with  the q-eye
 but if one opens both eyes simultaneously then one gets crazy.}\\
{ \tiny Wolfgang Pauli in a letter to Werner Heisenberg, 19 October 1926.}

\noindent
{\em \small We hope to have demonstrated that one can safely open a pair of complementary `eyes' simultaneously. He who does so may even `see more' than with one eye only. The means of observation being part of the physical world, Nature Herself protects him from seeing too much and at the same time protects Herself from being questioned too closely: quantum reality, as it emerges under physical observation, is intrinsically unsharp. It can be forced to assume sharp contours -- real properties -- by performing repeatable measurements. But sometimes unsharp measurements will be both, less invasive and more informative.}\\
{\tiny Paul Busch {\em et co} in the Epilogue of \cite{OQP}.}

\begin{abstract} We review the notion of \emph{complementarity} of observables in quantum mechanics, as formulated and studied by Paul Busch and his colleagues over the years. In addition, we provide further clarification on the operational meaning of the concept, and present several characterisations of complementarity -- some of which new -- in a unified manner, as a consequence of a basic factorisation lemma for quantum effects. We work out several applications, including the canonical cases of position-momentum, position-energy, number-phase, as well as periodic observables relevant to spatial interferometry. We close the paper with some considerations of complementarity in a noisy setting, focusing especially on the case of convolutions of position and momentum, which was a recurring topic in Paul's work on operational formulation of quantum measurements and central to his philosophy of unsharp reality.
\end{abstract}

\section{Introduction}
Complementarity and  uncertainty are two  key notions of quantum mechanics, and much
 of the scientific work of Paul Busch also dealt with these notions, especially
 with the problem of joint measurability of complementary observables and the relevance of the uncertainty relations to that question. The above quote is a poetic summary of Paul's general thinking on the subject matter -- we dare to say,  even twenty years after its formulation.

In this paper, we study a formulation of the notion of  complementary observables based on an intuitive  idea of  Niels  Bohr, put forward 
 especially in his  1935 paper \cite{Bohr1935}, and strongly advocated by Wolgang Pauli \cite{Pauli1933}, according to which {\em observables are complementary if all the experimental arrangements allowing their unambiguous operational definitions and measurements are mutually exclusive}. Bohr introduced the word complementarity  into the vocabulary of quantum theory  in his  classic Como lecture in 1927
\cite{Bohr1928} aiming to
acquire a consistent interpretation, or, at least, an intuitive understanding of the  then new  quantum mechanical formalism.
 In that paper Bohr  used the term complementarity  several  times in various intuitive contexts never defining it explicitly.
During the years 1927--1962 Bohr published a series of essays in which he strove to develop the idea of complementarity into a definite philosophical viewpoint. 
Most of them are collected in the three volumes entitled {\em Atomic Theory and the Description of Nature}, {\em  Atomic Physics and Human Knowledge}, and {\em Essays 1958--1962 on Atomic Physics and Human Knowledge} originally published in 1934, 1958, and 1963, respectively.
The secondary literature trying to understand Bohr's philosophy is abundant; we mention here only the monographs of   Max Jammer \cite{Jammer1974}, Henry Folse \cite{Folse}, and  Arkady Plotnitsky \cite{Plotnisky}.

Obviously, the experimental arrangements in Bohr's formulation of complementarity cannot be applied together. Therefore, complementary observables cannot be measured jointly.  In this reading, the accompanying 
bold idea of  Werner Heisenberg \cite{WH1927} could be  expressed as follows:\footnote{
For a critical analysis of Heisenberg's ideas on his 1927 paper and their further refinements we refer to the paper \cite{RF2019} of Werner and  Farrelly in this Special Issue. 
} 
complementary observables, like position and momentum, can be defined and measured jointly if sufficient ambiguities are allowed in their definitions. For the necessary defining ambiguities or measurement inaccuracies  $\delta q,\delta p$ for position and momentum Heisenberg gave his famous relation $\delta q\cdot\delta p\sim h$. For an elaboration of these ideas, we refer to the papers \cite{P65}-\cite{P93}  
as well as to the recently initiated
{\em The Quantum Uncertainty Page}  at
  http://paulbusch.wixsite.com/qu-page, with which Paul wanted to serve a large audience interested in the foundational questions of quantum physics.

This paper is structured as follows: In Section \ref{prel} we briefly review the operational formulation of quantum measurement theory as it appeared in most of Paul's work. In Section \ref{order} we collect various characterisations of effect order needed in the subsequent Section \ref{compl}, where we first present an operational definition of complementarity in terms of the lack of joint tests, and then derive a number of general characterisations.  Section \ref{applications} is devoted to applications of the general results to specific observables and their effects, including position-momentum, and interferometric complementarity,  which were central to Paul's work. Finally, in Section \ref{GettingAround} we discuss briefly the topic of our second motto.

\section{Operational formulation of quantum measurement}\label{prel}

The use of a rigorous framework for the quantum measurement theory was undoubtedly one of the main characteristics of Paul's work in general. In the case of complementarity, this is especially important, given the rather philosophical nature of Bohr's original ideas. We now review briefly the relevant concepts.

Throughout the paper we denote by $\hi$  the Hilbert space associated with a physical system under study and by $\lh$  and $\th$ the sets of bounded and trace class  operators on $\hi$.
The concepts of states, observables, and the statistical duality they define form the rudimentary frame of the description of the system: 
a state given as a positive trace one operator $\rho$ acting in $\hi$,  
an observable given as  a normalized positive operator measure $\sfe:\hA\to\lh$, defined on a $\sigma$-algebra  $\hA$ of subsets of a set $\Omega$, 
the probability measure $\hA\ni X\mapsto \sfe_\rho(X)= \tr{\rho\sfe(X)}\in[0,1]$ giving the measurement outcome statistics for the observable $\sfe$ in the state $\rho$.\footnote{We use freely the standard notations and results of quantum theory described in a greater detail, for instance, in the monograph \cite{QM}.}  

Observables are thus identified (and operationally defined) through the totality of their measurement outcome distributions $\sfe_\rho,$ $\rho\in\sh$, with $\sh$ denoting the set of all states of the system.
In addition to this purely statistical level of  description, there are  two deeper levels which take into account 
the  conditional state changes  of the system caused by a measurement on it, or even adopting the most comprehensive level of modeling the interaction and information transfer between the system and the measuring apparatus. Indeed,  each observable $\sfe$ can be realized with a measurement scheme $\hM=(\ki,\sigma,\sfz,U)$, with $\ki$ being the probe Hilbert space, $\sigma$ the initial probe state, $\sfz$ the pointer observable, and $U$ the unitary measurement coupling. If $\hI$ is the instrument defined by $\hM$, then the three levels of the statistical description given by quantum mechanics  get expressed as follows: for any state $\rho$ and for any $X\in\hA$,
\begin{equation}\label{qtm1}
\sfe_\rho(X)=\tr{\rho\sfe(X)}=\tr{\hI(X)(\rho)}=\tr{U(\rho\otimes\sigma)U^*I\otimes\sfz(X)}.
\end{equation}
In fact, any observable $\sfe$ can be identified with an equivalence class of (completely positive) instruments $\hI$ satisfying \eqref{qtm1}, whereas any such instrument $\hI$ can be identified with an equivalence class of  measurements $\hM$ fulfilling \eqref{qtm1}. We wish to emphasize the interpretation of the non-normalized state $\hI(X)(\rho)$ as  a conditional state giving rise to conditional probabilities in the sense that for any other observable $\sff$, with the value space $(\Omega',\hB)$,  the number $\tr{\hI(X)(\rho)\sff(Y)}=\tr{\rho\,\hI(X)^*(\sff(Y))}$ is the probability that a measurement of $\sff$ leads to a result in $Y\in\hB$, given that in  the first performed  $\sfe$-measurement, with the instrument $\hI$, a result in $X\in\hA$  was obtained.

As complementarity represents an extreme case of incompatibility, we also recall at this point the definition of the latter: two (or more) observables  $\sfe_1$ and $\sfe_2$ are compatible or jointly measurable if they have a joint observable, that is, there is an observable $\sfg$
defined on the product $\sigma$-algebra $\hA_1\otimes\hA_2$ of subsets of $\Omega_1\times\Omega_2$ 
 having the two as the marginal observables, that is, for instance, $\sfe_1(X)=\sfg(X\times\Omega_2)$ for all $X\in\hA_1$.

Observables are effect valued measures whereas instruments are operation valued measures. As we will see below, complementarity is defined in terms of the effects constituting the observables, and the order structure plays a central role. Let $\eh$ denote the set of effects (operators $E\in\lh$ with $0\leq E\leq I$)  and $\oh$ the set of operations (completely positive linear maps $\Phi:\th\to\th$ with $0\leq\tr{\Phi(\rho)}\leq 1$ for any state $\rho$). As is obvious from the definitions,  they both are naturally ordered. 
We also recall that
any operation $\Phi\in\oh$ defines an effect $\Phi^*(I)\in\eh$ through  its dual operation $\Phi^*:\lh\to\lh$ 
and any effect $E\in\eh$ is of the form $E=\Phi^*(I)$ for some  $\Phi\in\oh$. Defining two operations equivalent if their effects are the same one gets a bijective correspondence between the effects and the equivalence classes of operations. With a slight abuse of notation, we write $\Phi\in E$ if $\Phi^*(I)=E$ and we say that the operation $\Phi$ implements the effect $E$. Similarly, we write $\hI\in\sfe$ if the instrument $\hI$  defines the observable $\sfe$, that is, for any $X\in\hA$, one has $\sfe(X)=\hI(X)^*(I)$.

\section{On the order structure of the set of effects}\label{order}
Complementarity of observables will be defined and characterised below in terms of order properties of pairs of their effects. This section develops the necessary framework.

\subsection{Square root and other factorisations}
The characterisations of complementarity appearing in this paper are all based on factorising an effect into a product of two contractions. While these results are all elementary and appear in the literature, they have not been systematically applied in the context of complementarity.

For any $E\in\eh$, we let 
$E^{\frac 12}$ be its 
square root, and note that the support space $\hi_E$ of $E$ is
 $$
\mathcal H_E=(\ker E)^\perp=\overline{{\rm ran}\, E}=(\ker E^{\frac 12})^\perp=\overline{{\rm ran}\, E^{\frac 12}}=
\overline{{\rm ran}\, E_0^{\frac 12}}, 
$$
with $E_0$ and $E_0^{\frac 12}$ denoting the restrictions  of $E$ and $E^{\frac 12}$ to $\hi_E$. We let  $P_E$ denote the support projection of $E$, that is, the projection onto the support subspace $\mathcal H_E$. Occasionally, we also let $\sfe^A$ denote the spectral measure of a selfadjoint operator $A$. 

\begin{remark}\label{huomautus1}\rm
We note that the restrictions $E_0$ and $E_0^{\frac 12}$ are bijective onto their ranges. In particular, if $0\in \sigma(E_0)$ then $0\in \sigma_c(E_0)$, and therefore $\sfe^{E_0}(\{0\})=0$, which implies that $x\mapsto x^{-1}$ and $x\mapsto x^{-\frac 12}$ are always measurable and $\sfe^{E_0}$-almost everywhere finite on $[0,1]$. Hence the inverses of the bijections $E_0:\mathcal H_E\to {\rm ran}\, E_0$ and $E_0^{\frac 12}:\mathcal H_E\to {\rm ran}\, E^{\frac 12}_0$ can be constructed via functional calculus, that is,
\begin{align}
{\rm dom}\, E_0^{-\frac 12} &=\left\{\varphi\in \mathcal H_E \,\Big|\, \int_{[0,1]} x^{-1} \sfe^{E_0}_{\varphi,\varphi}(dx)<\infty\right\} = {\rm ran}\, E_0^{\frac 12},\label{domain}\\
\langle \psi| E_0^{-\frac 12}\varphi \rangle &= \int_{[0,1]} x^{-\frac 12}\sfe^{E_0}_{\psi,\varphi}(dx) \,\text{ for all } \psi\in \mathcal H_E,\; \varphi\in {\rm dom}\,E_0^{-\frac 12}\nonumber,
\end{align}
and a similar statement holds for $E_0^{-1}$.\footnote{Here, e.g.\ $\sfe^{E_0}_{\psi,\varphi}$ denotes the complex measure $X\mapsto\ip{\psi}{\sfe^{E_0}(X)\varphi}$.} 
In particular, $E_0^{-\frac 12}$ is selfadjoint on the domain \eqref{domain}, which provides a useful characterisation of the range of $E^{\frac 12}$. Note that by the Hellinger-Toeplitz theorem, ${\rm ran}\, E_0^{\frac 12}=\mathcal H_E$ if and only if $E_0^{-\frac 12}$ is bounded, which is equivalent to the analogous statement for $E_0^{-1}$, and hence further equivalent to $0\notin \sigma_c(E_0)$.
\end{remark}

We now proceed to state two simple lemmas, from which various characterisations of complementarity can conveniently be derived. These lemmas appear essentially in \cite{douglas}; however, as the short and elementary proofs quite effectively illustrate the structure of effects relevant to complementarity, we have included them here. The first one characterises the order relation in terms of the ``splitting'' of an effect into contractions other than the square root.
\begin{lemma}\label{douglas1} Let $\mathcal H$, $\mathcal K$, $\mathcal M$ be Hilbert spaces
 and $K\in \mathcal{L(H,K)}$, $M\in \mathcal{L(H,M)}$ contractions.\footnote{Here, e.g.\ $\mathcal{L(H,K)}$ is the set of bounded operators from $\hi$ to $\ki$.} 
 The following conditions are equivalent:
\begin{itemize}
\item[(i)] $M^*M\leq K^*K$;
\item[(ii)] there exists a 
contraction $C\in \mathcal{L(K,M)}$ such that $M=CK$ and $\ker K^*\subset \ker C$;
\item[(iii)] there exists an 
effect $Q\in \mathcal{E(K)}$ such that $M^*M=K^*QK$ and $\ker K^*\subset \ker Q$.
\end{itemize}
In this case $C$ and $Q$ are unique, $Q=C^*C$, and $\|C\|^2 =\|Q\|= \inf\{ \lambda\in [0,1] \mid M^*M\leq \lambda K^*K\}$.
\end{lemma}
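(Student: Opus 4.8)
The plan is to establish the cycle (i)$\Rightarrow$(ii)$\Rightarrow$(iii)$\Rightarrow$(i), and then treat uniqueness and the norm identities separately. The engine for (i)$\Rightarrow$(ii) is the classical Douglas factorisation argument: on ${\rm ran}\,K$ one sets $D(K\varphi):=M\varphi$, and hypothesis (i) gives $\|M\varphi\|^2=\langle\varphi|M^*M\varphi\rangle\le\langle\varphi|K^*K\varphi\rangle=\|K\varphi\|^2$ for every $\varphi\in\mathcal H$; this single estimate simultaneously shows that $D$ is well defined (apply it to $\varphi\in\ker K$) and that it is a contraction on ${\rm ran}\,K$. I would then extend $D$ by continuity to $\overline{{\rm ran}\,K}$ and by $0$ on its orthogonal complement $\ker K^*$, obtaining a contraction $C\in\mathcal{L(K,M)}$ with $CK=M$ and $\ker K^*\subseteq\ker C$ built into the construction. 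The implication (ii)$\Rightarrow$(iii) should then be immediate on taking $Q:=C^*C$: it is an effect because $0\le C^*C$ and $\|C^*C\|=\|C\|^2\le1$, it satisfies $K^*QK=(CK)^*(CK)=M^*M$, and $\ker K^*\subseteq\ker C=\ker C^*C=\ker Q$. For (iii)$\Rightarrow$(i) I would merely read off $K^*K-M^*M=K^*(I-Q)K\ge0$.

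For uniqueness of $C$, I would note that the difference of two candidates vanishes on ${\rm ran}\,K$, hence on $\overline{{\rm ran}\,K}$, and by the kernel condition also on $\ker K^*$, so it is zero by the decomposition $\mathcal K=\overline{{\rm ran}\,K}\oplus\ker K^*$. For uniqueness of $Q$, writing $R:=Q_1-Q_2=R^*$ for two candidates, the kernel condition gives $\ker K^*\subseteq\ker R$, so ${\rm ran}\,R\subseteq(\ker R)^\perp\subseteq\overline{{\rm ran}\,K}$, while $K^*RK=0$ forces $\langle Rx|y\rangle=0$ for $x,y\in{\rm ran}\,K$ and hence for $x,y\in\overline{{\rm ran}\,K}$; choosing $y=Rx$ kills $R$ on $\overline{{\rm ran}\,K}$, and it already vanishes on $\ker K^*$, so $R=0$. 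Combining uniqueness with the construction in the step (ii)$\Rightarrow$(iii) identifies the $Q$ of (iii) with $C^*C$ for the $C$ of (ii).

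For the norms, the C*-identity gives $\|C\|^2=\|C^*C\|=\|Q\|$ for free, so the content is the equality $\|Q\|=\lambda_0$, where $\lambda_0:=\inf\{\lambda\in[0,1]\mid M^*M\le\lambda K^*K\}$ and the set over which we take the infimum is nonempty because it contains $1$ by (i). One inequality is easy: $M^*M=K^*QK\le\|Q\|\,K^*K$ with $\|Q\|\le1$ shows $\|Q\|$ lies in that set, so $\|Q\|\ge\lambda_0$. For $\|Q\|\le\lambda_0$, I would fix any $\lambda$ with $K^*(\lambda I-Q)K\ge0$ and use the kernel condition decisively: $\ker K^*\subseteq\ker Q$ means $Q$ annihilates $(\overline{{\rm ran}\,K})^\perp$, so, being self-adjoint, it leaves $\overline{{\rm ran}\,K}$ invariant and splits as $Q=Q_1\oplus0$ along $\mathcal K=\overline{{\rm ran}\,K}\oplus\ker K^*$; then $K^*(\lambda I-Q)K\ge0$ reads exactly $\langle(\lambda I-Q_1)K\varphi|K\varphi\rangle\ge0$ for all $\varphi$, and the density of ${\rm ran}\,K$ in $\overline{{\rm ran}\,K}$ upgrades this to $\lambda I-Q_1\ge0$, i.e.\ $\|Q\|=\|Q_1\|\le\lambda$. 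Taking the infimum over such $\lambda$ gives $\|Q\|\le\lambda_0$.

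I expect the one genuinely delicate point to be this last transfer of the operator inequality from $K^*QK$ back to $Q$: it is false without the hypothesis $\ker K^*\subseteq\ker Q$, and the argument hinges on first using that hypothesis to see that $Q$ is supported on $\overline{{\rm ran}\,K}$ and only then exploiting density of ${\rm ran}\,K$ there. The Douglas construction and the two uniqueness arguments should be entirely routine.
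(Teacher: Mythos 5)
Your proof is correct and follows essentially the same route as the paper: the classical Douglas factorisation $K\varphi\mapsto M\varphi$ extended by zero on $\ker K^*$, the identification $Q=C^*C$, and the easy direction $Q\le I$ for (iii)$\Rightarrow$(i). The only cosmetic difference is in the bound $\|Q\|\le\lambda_0$, which you obtain by splitting $Q=Q_1\oplus 0$ along $\overline{{\rm ran}\,K}\oplus\ker K^*$ and using density, whereas the paper reads $\|C\|^2\le\lambda$ directly off the construction of $C$ from $\|M\varphi\|^2\le\lambda\|K\varphi\|^2$ --- the two arguments are equivalent.
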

\begin{proof} Assuming (ii), any $Q\in \mathcal{E(K)}$ with $M^*M=K^*QK$ and $\ker K^*\subset \ker Q$ has $K^*C^*CK =M^*M=K^*QK$, so $Q=C^*C$ as both $Q$ and $C$ vanish on $(\overline{{\rm ran}\, K})^\perp=\ker K^*$. Hence (iii) holds. Clearly, (iii) implies (i) as $Q\leq \id$. Assuming (i) we have $\|M\varphi\|^2 \leq \|K\varphi\|^2$ for each $\varphi\in \mathcal H$, so the map $K\varphi \mapsto M\varphi$ from ${\rm ran}\, K$ into ${\rm ran}\, M$ is well defined and extends to a contraction $C\in \mathcal {L(K)}$, which is unique if required to vanish on $(\overline{{\rm ran}\, K})^\perp =\ker K^*$, so (ii) holds. Hence (i)-(iii) are equivalent and $Q=C^*C$ when they hold. In this case $r=\inf\{ \lambda\in[0,1] \mid M^*M\leq \lambda K^*K\}\leq \|Q\|=\|C\|^2$ as $M^*M=K^*QK\leq \|Q\| K^*K$, and if $\lambda\in [0,1]$ has $M^*M\leq \lambda K^*K$, then $\|M\varphi\|^2\leq \lambda \|K\varphi\|^2$ for all $\varphi\in \mathcal H$, so $\|C\|^2 \leq \lambda$ by the construction of $C$. Hence $\|C\|^2=r$.
\end{proof}

The second lemma relates order to the inclusion of the ranges of the contractions appearing in the first lemma.
\begin{lemma}\label{douglas2} Let $K$ and $M$ be as in the above lemma. The following are equivalent:
\begin{itemize}
\item[(i)] $M^*M\leq \lambda K^*K$ for some $\lambda\geq 0$;
\item[(ii)] ${\rm ran}\, M^*\subset {\rm ran}\, K^*$.
\end{itemize}
\end{lemma}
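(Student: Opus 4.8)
The plan is to recognise this as the range-inclusion form of Douglas's factorisation theorem, applied to the adjoints $K^*$ and $M^*$: the common ground of the two conditions is a factorisation $M=C^*K$, equivalently $M^*=K^*C^*$, with $C$ bounded. Accordingly, I would obtain (i)$\Rightarrow$(ii) essentially for free from Lemma \ref{douglas1}, and reserve the real work for (ii)$\Rightarrow$(i), where a bounded operator has to be produced out of the mere range inclusion.

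For (i)$\Rightarrow$(ii), suppose $M^*M\leq\lambda K^*K$. Replacing $\lambda$ by $\max\{\lambda,1\}$ we may assume $\lambda\geq 1$, so that $\lambda^{-\frac 12}M$ is still a contraction and $(\lambda^{-\frac 12}M)^*(\lambda^{-\frac 12}M)=\lambda^{-1}M^*M\leq K^*K$. Lemma \ref{douglas1} then supplies a contraction $C\in\mathcal{L(K,M)}$ with $\lambda^{-\frac 12}M=CK$, hence $M^*=\lambda^{\frac 12}K^*C^*$ and therefore ${\rm ran}\, M^*\subset{\rm ran}\, K^*$.

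For (ii)$\Rightarrow$(i), assume ${\rm ran}\, M^*\subset{\rm ran}\, K^*$. Since $K^*$ restricts to an injection on $(\ker K^*)^\perp=\overline{{\rm ran}\, K}$, for each $\psi\in\mathcal M$ there is a unique vector $D\psi\in\overline{{\rm ran}\, K}$ with $K^*(D\psi)=M^*\psi$; this defines a linear map $D:\mathcal M\to\mathcal K$ satisfying $K^*D=M^*$. The decisive point is that $D$ has closed graph: if $\psi_n\to\psi$ in $\mathcal M$ and $D\psi_n\to\eta$ in $\mathcal K$, then by continuity of $K^*$ and $M^*$ one gets $M^*\psi=\lim M^*\psi_n=\lim K^*(D\psi_n)=K^*\eta$, while $\eta\in\overline{{\rm ran}\, K}$ as a norm limit of such vectors, so $\eta=D\psi$ by the uniqueness of the preimage. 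The closed graph theorem then makes $D$ bounded, and taking adjoints in $M^*=K^*D$ gives $M=D^*K$; hence $M^*M=K^*DD^*K\leq\|D\|^2K^*K$, which is (i).

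The main obstacle is precisely the boundedness of $D$ in the last step: the preimage $D\psi$ must be chosen inside $\overline{{\rm ran}\, K}$, not merely as some solution of $K^*\xi=M^*\psi$, so that the graph of $D$ genuinely closes up and the closed graph theorem can be invoked. Once that selection is in place, everything else is routine manipulation with adjoints.
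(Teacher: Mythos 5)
Your proposal is correct and follows essentially the same route as the paper: the forward direction via the factorisation of Lemma \ref{douglas1}, and the converse by showing that the operator $D$ (the paper's $R=(K^*)^{-1}M^*$, with the preimage selected in $\overline{{\rm ran}\,K}$) is closed and everywhere defined, hence bounded by the closed graph theorem, yielding $M^*M\leq\|D\|^2K^*K$. Your explicit rescaling in the forward direction is a small extra care the paper leaves implicit, but the argument is the same.
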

\begin{proof} Clearly, (i) implies (ii) by Lemma \ref{douglas1}. Furthermore, the restriction of $K^*$ to $(\ker K^*)^\perp=\overline{{\rm ran}\, K}$ is bijective onto $\mathcal D={\rm ran}\, K^*$ with inverse $(K^*)^{-1}: \mathcal D\to \overline{{\rm ran}\, K}$ densely defined and closed in $\overline{\mathcal D}=\ker K$, since a sequence $(\varphi_n)$ in $ \mathcal D$ for which $\lim_n \varphi_n=\varphi\in \ker K$ and $\lim_n (K^*)^{-1}\varphi_n=\psi$ also has $\lim_n \varphi_n =\lim_n K^* (K^*)^{-1}\varphi_n = K^*\psi$ as $K^*$ is bounded, so $\varphi = K^*\psi \in {\rm ran}\, K^*=\mathcal D$ and $(K^*)^{-1}\varphi = \psi$. If (ii) holds then ${\rm ran}\, M^* \subset \mathcal D$ so $R=(K^*)^{-1}M^*$ is defined on all of $\mathcal M$, and closed as $(K^*)^{-1}$ is closed and $M^*$ bounded. Hence $R\in \mathcal{L(M,K)}$ by the closed graph theorem, so $K^*R=M^*$ and hence $M^*M=K^*RR^*K\leq \|R\|^2K^*K$, proving (i).
\end{proof}

\subsection{Range and order}
We now show how two different characterisations of effect order follow from the above factorisation lemmas.

Our first application is the following proposition. In order to state it we recall some relevant terminology: the one-dimensional projections $P[\fii]=\kb{\fii}{\fii}$, $\fii\in\hi$, $\no{\fii}=1$,  
are the atoms of the projection lattice $\ph$ and any $P\in\ph$ is the join (the least upper bound) of all the atoms contained in it. Also, the meet of any two projections exists both in $\ph$ and in $\eh$ and is clearly the projection onto the intersection of the ranges of the two projections. 
Though there are no atoms in $\eh$, it is convenient to call any rank-1 effect 
$\kb{\fii}{\fii},\fii\in\hi$,  with $0\ne\no{\fii}\leq 1$,  
a  weak atom. 
 According to \cite[Corollary 3]{BG1999}  each effect is the join  of all the weak atoms contained in it.
On the other hand, the weak atoms contained in an effect  $E$ are characterised  by \cite[Theorem 3]{BG1999}:
\begin{proposition}\label{Paul_Stan}
Let $E$  be an effect and $\kb\fii\fii$ a weak atom. Then
$$
\exists \lambda >0:\; \lambda \kb\fii\fii
\leq E \; \Longleftrightarrow \; \fii\in {\rm ran}\,E^{\frac 12}.
$$
Moreover, then $\sup\{\lambda\geq 0 \mid \lambda |\varphi\rangle\langle \varphi| \leq E \} = \big\|E_0^{-\frac 12}\varphi\big\|^{-2}.$
 \end{proposition}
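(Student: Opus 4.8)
The plan is to derive both the equivalence and the supremum formula directly from the factorisation lemmas, specialising $M$ and $K$ to rank-one and square-root operators respectively. Concretely, I would set $\mathcal M = \C$, so that a contraction $M \in \mathcal L(\hi,\C)$ is precisely a functional $\psi \mapsto \langle \eta | \psi \rangle$ with $\|\eta\| \le 1$, and $M^*M = \kb{\eta}{\eta}$. Choosing $\eta = \sqrt{\lambda}\,\varphi$ (for $\lambda > 0$ small enough that $\sqrt{\lambda}\|\varphi\| \le 1$) gives $M^*M = \lambda \kb{\varphi}{\varphi}$. For $K$ I would take $K = E^{\frac 12} \in \mathcal L(\hi)$, a contraction since $E \in \eh$, so that $K^*K = E$. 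Then $\lambda \kb{\varphi}{\varphi} \le E$ for some $\lambda > 0$ is exactly condition (i) of Lemma \ref{douglas1} (equivalently, the hypothesis of Lemma \ref{douglas2} with this $K$, $M$).

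The forward implication $(\Rightarrow)$ is then immediate from Lemma \ref{douglas2}: if $\lambda \kb{\varphi}{\varphi} \le E$ for some $\lambda > 0$, then $M^*M \le \lambda K^*K$, so ${\rm ran}\, M^* \subset {\rm ran}\, K^*$; but ${\rm ran}\, M^* = \C\cdot\varphi$ (the span of $\varphi$) and ${\rm ran}\, K^* = {\rm ran}\, E^{\frac 12}$, giving $\varphi \in {\rm ran}\, E^{\frac 12}$. For the converse, if $\varphi \in {\rm ran}\, E^{\frac 12}$, I would use Remark \ref{huomautus1}: then $\varphi \in \mathcal H_E = {\rm ran}\, E_0^{\frac 12}$, so $E_0^{-\frac 12}\varphi$ is a well-defined vector in $\mathcal H_E$, and one checks directly that $E^{\frac 12}\big(E_0^{-\frac 12}\varphi\big) = \varphi$ (using that $E^{\frac 12}$ and $E_0^{\frac 12}$ agree on $\mathcal H_E$ and $E^{\frac 12}$ vanishes on $\mathcal H_E^\perp = \ker E^{\frac 12}$). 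Hence $\varphi \in {\rm ran}\, E^{\frac 12} = {\rm ran}\, K^*$, and Lemma \ref{douglas2} yields $M^*M \le \lambda K^*K$ for some $\lambda \ge 0$, i.e.\ $\kb{\varphi}{\varphi} \le \lambda E$; rescaling gives $\lambda' \kb{\varphi}{\varphi} \le E$ with $\lambda' = 1/\lambda > 0$ (the case $\lambda = 0$ forcing $\varphi = 0$, excluded for a weak atom).

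For the supremum formula I would invoke the last clause of Lemma \ref{douglas1}, which identifies $\inf\{\mu \in [0,1] \mid M^*M \le \mu K^*K\}$ with $\|C\|^2 = \|Q\|$ where $Q = C^*C$ is the unique effect with $M^*M = K^*QK$. With $M^*M = \kb{\varphi}{\varphi}$ and $K = E^{\frac 12}$, the factorising map is $C = M \circ (E^{\frac 12}_0)^{-1} P_E$, so on $\mathcal H_E$ it acts as $\psi \mapsto \langle E_0^{-\frac 12}\varphi \,|\, \psi\rangle$, giving $\|C\|^2 = \|E_0^{-\frac 12}\varphi\|^2$. Therefore $\inf\{\mu \mid \kb{\varphi}{\varphi} \le \mu E\} = \|E_0^{-\frac 12}\varphi\|^2$, and taking reciprocals turns the infimum of the scaling on the $E$-side into the supremum $\sup\{\lambda \ge 0 \mid \lambda\kb{\varphi}{\varphi} \le E\} = \|E_0^{-\frac 12}\varphi\|^{-2}$. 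The main subtlety to get right is the bookkeeping between the two normalisations — "$\lambda\kb{\varphi}{\varphi} \le E$" versus "$M^*M \le \mu K^*K$" — and making sure the constant $C$ from Lemma \ref{douglas1} is correctly identified as the functional determined by $E_0^{-\frac 12}\varphi$ rather than by $\varphi$ itself; once the identity $M^*M = K^*(C^*C)K$ is written out with $C^*C = \kb{E_0^{-\frac 12}\varphi}{E_0^{-\frac 12}\varphi}$ on $\mathcal H_E$, everything reduces to $E^{\frac 12} E_0^{-\frac 12}\varphi = \varphi$, already established above.
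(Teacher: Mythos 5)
Your proof is correct and takes exactly the route the paper intends: the paper's own proof is just the one-line remark that the proposition ``follows immediately from Lemmas \ref{douglas2} and \ref{douglas1}'', and your specialisation $M=\bra{\sqrt{\lambda}\,\varphi}$, $K=E^{\frac 12}$ (with the rescaling bookkeeping you flag for the supremum formula) is precisely the intended instantiation. The only cosmetic slip is the line ``$\varphi\in\mathcal H_E={\rm ran}\,E_0^{\frac 12}$'' --- $\mathcal H_E$ is the \emph{closure} of that range --- but what you actually use, namely $\varphi\in{\rm ran}\,E^{\frac 12}={\rm ran}\,E_0^{\frac 12}={\rm dom}\,E_0^{-\frac 12}$, is correct.
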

 \begin{proof} Follows immediately from Lemmas \ref{douglas2} and \ref{douglas1}.
 \end{proof}

The second application is \emph{dilation}: every effect $E$ can be dilated to a projection $P$ on a larger Hilbert space, as $E=J^*PJ$ where $J$ is an isometry. This is an instance of the well-known Naimark dilation theorem, and clearly a particular case of the above factorisation. Hence we can easily derive the following result:

\begin{lemma}\label{dilat} Let $E\in \mathcal E(\mathcal H)$ be an effect, $\psi\in \mathcal H$ with $\|\psi\|\leq 1$, and $E=J^*PJ$, $J\in\mathcal L(\hi,\ki)$, a Naimark dilation of $E$ into a projection $P\in \mathcal{L(K)}$. The following conditions are equivalent:
\begin{itemize}
\item[(i)] $|\psi\rangle\langle \psi| \leq E$;
\item[(ii)] there is an $\eta\in \overline{{\rm ran}\, PJ}$, $\|\eta\|\le1$, such that $\psi = J^*\eta$.
\end{itemize}
\end{lemma}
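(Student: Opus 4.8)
The plan is to derive both implications from the factorisation lemmas, treating the rank-one effect $\kb{\psi}{\psi}$ as $M^*M$ with $M = \bra{\psi} \in \mathcal{L(H,M)}$ where $\mathcal M = \mathbb C$ (a contraction since $\|\psi\|\le 1$), and writing $E = K^*K$ with $K = PJ \in \mathcal{L(H,K)}$, which is a contraction because $P$ is a projection and $J$ an isometry, and indeed $K^*K = J^*P^*PJ = J^*PJ = E$. With these identifications, $\kb{\psi}{\psi}\le E$ is exactly condition (i) of Lemma \ref{douglas1} applied to the pair $K,M$, so there is a unique contraction $C \in \mathcal{L(K,M)} = \mathcal{L}(\mathcal K,\mathbb C)$ with $M = CK$ and $\ker K^* \subset \ker C$. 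Such a $C$ is of the form $C = \bra{\eta}$ for a unique $\eta \in \mathcal K$ with $\|\eta\| = \|C\| \le 1$, and the condition $\ker K^* \subset \ker C$ says precisely that $\eta \perp \ker K^* = (\overline{{\rm ran}\, K})^\perp$, i.e.\ $\eta \in \overline{{\rm ran}\, K} = \overline{{\rm ran}\, PJ}$. Finally $M = CK$ reads $\bra{\psi} = \bra{\eta} P J$, and taking adjoints gives $\ket{\psi} = J^*P^*\ket{\eta} = J^*P\ket{\eta}$; since $\eta \in \overline{{\rm ran}\, P}$ (as $\overline{{\rm ran}\, PJ}\subset \overline{{\rm ran}\, P}$ and $P$ is a projection) we have $P\eta = \eta$, so $\psi = J^*\eta$. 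This is exactly (ii).

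Conversely, assuming (ii), pick $\eta \in \overline{{\rm ran}\, PJ}$ with $\|\eta\|\le 1$ and $\psi = J^*\eta$. Then for every $\varphi \in \mathcal H$,
\begin{equation*}
\langle \varphi | \psi\rangle\langle\psi|\varphi\rangle = |\langle \psi|\varphi\rangle|^2 = |\langle \eta | J\varphi\rangle|^2 = |\langle \eta | PJ\varphi\rangle|^2 \le \|\eta\|^2 \|PJ\varphi\|^2 \le \langle \varphi | J^*PJ\varphi\rangle = \langle\varphi|E\varphi\rangle,
\end{equation*}
where the middle equality uses $\eta \in \overline{{\rm ran}\, PJ}$ so that $P\eta = \eta$ and hence $\langle\eta|J\varphi\rangle = \langle P\eta|J\varphi\rangle = \langle\eta|PJ\varphi\rangle$, and the first inequality is Cauchy--Schwarz together with $\|\eta\|\le 1$. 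Thus $\kb{\psi}{\psi}\le E$, which is (i).

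The argument is essentially bookkeeping once the identifications $M = \bra{\psi}$ and $K = PJ$ are made; the only point requiring a little care is the translation between the abstract contraction $C$ produced by Lemma \ref{douglas1} and the vector $\eta$, in particular checking that the side condition $\ker K^* \subset \ker C$ corresponds exactly to the membership $\eta \in \overline{{\rm ran}\, PJ}$ rather than merely $\eta \in \overline{{\rm ran}\, P}$, and that $P\eta = \eta$ is available when we take adjoints of $M = CK$. I expect this to be the main (minor) obstacle; everything else follows directly from the earlier lemmas. Alternatively, one could bypass Lemma \ref{douglas1} in the forward direction and argue directly: from $\kb{\psi}{\psi}\le E = \|J\varphi\|$-estimates one gets $\psi \in {\rm ran}\, E^{\frac12}$ by Proposition \ref{Paul_Stan}, and then uses the polar-type relation between $E^{\frac12}$ and $PJ$; but the route through Lemma \ref{douglas1} is cleaner and is the one I would write up.
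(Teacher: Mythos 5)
Your proof is correct and follows essentially the same route as the paper: both apply Lemma \ref{douglas1} with $M=\bra{\psi}$ and $K=PJ$, identifying $\eta$ through the (adjoint of the) contraction $C$ and using $\overline{{\rm ran}\, PJ}\subset {\rm ran}\, P$ to get $J^*P\eta=J^*\eta$. The only difference is that you spell out the converse implication with an explicit Cauchy--Schwarz estimate, where the paper simply relies on the equivalence already built into Lemma \ref{douglas1}; this is a matter of detail, not of method.
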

\begin{proof} We use Lemma \ref{douglas1} with $M=\langle \psi|:\mathcal H\to \mathbb C$ and $K=PJ$, so the adjoint of $C:\mathcal K\to \mathbb C$ has one-dimensional range ${\rm ran} \,C^*\subset \overline {{\rm ran}\, PJ}\subset {\rm ran}\, P$ (where the second inclusion is due to ${\rm ran}\, P$ being closed). Taking $\eta\in {\rm ran} \,C^*$ the lemma gives $\psi = J^*P\eta = J^*\eta$.
\end{proof}

\begin{remark} More generally, condition (iii) of Lemma \ref{douglas1} yields the following statement: if $E = J^*PJ$ is a dilation of an $E\in \mathcal{E(H)}$, then $A\leq E$ for $A\in \mathcal{E(H)}$ if and only if $A = J^*QPJ$ for a $Q\in \mathcal{E(K)}$ commuting with $P$. (Commutativity follows since ${\rm ran}\, Q\subset \overline{{\rm ran}\, PJ}\subset {\rm ran}\, P$.) This is a simple special case of the Radon-Nikodym theorem for completely positive maps; see e.g.\ \cite{Raginsky}, which could therefore also be used to derive the lemma. Since we do not need the general statement, the above elementary proof is justified. 
\end{remark}

\subsection{Bounding the support projection}
In many relevant cases (such as position and momentum; see below), the effect is constructed via functional calculus from some existing selfadjoint operator. While every effect can be written in this form, 
 the setting becomes interesting when the function has a nontrivial structure -- the cases of smearing of a sharp observable with a Markov kernel or a convolution with a probability measure fall into this category. The following Lemma is relevant in this context:
\begin{lemma}\label{necessary1}
If $E=f(A)=\int f\,d\sfa$ for some spectral measure $\sfa:\,\bo\R\to\lh$, with the selfadjoint operator $A=\int x\,\sfa(dx)$,  and a Borel measurable function $f:\,\R\to[0,1]$, then $P_E\leq \sfa({\rm supp}(f))$, but equality does not hold in general.\footnote{We use the notation $\bo T$ for the Borel $\sigma$-algebra of any topological space $T$.}
\end{lemma}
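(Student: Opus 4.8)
The plan is to identify $P_E$ explicitly — it equals the spectral projection of $A$ onto $\{x:f(x)>0\}$ — which gives the inequality for free and pinpoints why equality can fail.

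\emph{The key computation.} I would first show that $P_E=\sfa(N)$, where $N=\{x\in\R : f(x)>0\}$. For $\varphi\in\hi$ one has $\|E\varphi\|^2=\langle\varphi|(f^2)(A)\varphi\rangle=\int_\R f(x)^2\,\sfa_{\varphi,\varphi}(dx)$, which is $0$ if and only if $f=0$ holds $\sfa_{\varphi,\varphi}$-almost everywhere, i.e.\ $\sfa_{\varphi,\varphi}(N)=\|\sfa(N)\varphi\|^2=0$. Hence $\ker E=\ker\sfa(N)$, and taking orthogonal complements gives $P_E=\sfa(N)$ (both being the projection onto $(\ker\sfa(N))^\perp$). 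Since $f$ vanishes on the open set $\R\setminus{\rm supp}(f)$ we have $N\subseteq{\rm supp}(f)$, and monotonicity of the projection-valued measure $\sfa$ yields $P_E=\sfa(N)\le\sfa({\rm supp}(f))$, the asserted inequality. (Equivalently: $f=f\cdot\mathbf 1_{{\rm supp}(f)}$ pointwise, so multiplicativity of the functional calculus gives $E=\sfa({\rm supp}(f))\,E$, whence $\hi_E=\overline{{\rm ran}\,E}\subseteq{\rm ran}\,\sfa({\rm supp}(f))$.)

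\emph{Failure of equality.} From $P_E=\sfa(N)$ it is clear that $P_E=\sfa({\rm supp}(f))$ holds if and only if $\sfa\big({\rm supp}(f)\setminus N\big)=0$, and ${\rm supp}(f)\setminus N$ is a set on which $f$ vanishes identically. So it suffices to make $A$ have an eigenvalue there. For instance take $\hi=\C^2$ with orthonormal basis $e_1,e_2$, let $A$ be self-adjoint with $Ae_1=0$, $Ae_2=e_2$ (so $\sfa(\{0\})=\kb{e_1}{e_1}$ and $\sfa(\{1\})=\kb{e_2}{e_2}$), and let $f:\R\to[0,1]$ be continuous with $f\equiv0$ on $(-\infty,0]$, $f(x)=x$ on $[0,1]$, $f\equiv1$ on $[1,\infty)$. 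Then ${\rm supp}(f)=[0,\infty)$, while $E=f(A)=\kb{e_2}{e_2}$, so $P_E=\kb{e_2}{e_2}<I=\sfa([0,\infty))=\sfa({\rm supp}(f))$.

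\emph{Main obstacle.} There is no obstacle of real substance. The one point needing a little care is the identification $P_E=\sfa(\{f>0\})$ through the $\sfa_{\varphi,\varphi}$-almost-everywhere vanishing of $f$ on $\ker E$; once that is in place, both the inequality and the description of the gap — the $\sfa$-mass carried by the zero set of $f$ inside ${\rm supp}(f)$, which is nonempty whenever $\{f>0\}$ fails to be closed — follow immediately.
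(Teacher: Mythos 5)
Your proof is correct, and both halves take a genuinely different route from the paper. For the inequality, the paper simply notes $\ip{\fii}{E\fii}\leq\ip{\fii}{\sfa({\rm supp}(f))\fii}$ and deduces $\ker\sfa({\rm supp}(f))\subset\ker E$; you instead prove the sharper identity $P_E=\sfa(\{f>0\})$, from which the inequality follows via $\{f>0\}\subset{\rm supp}(f)$. This is a real gain: it isolates exactly when equality fails, namely when $\sfa$ gives positive mass to the zero set of $f$ inside ${\rm supp}(f)$. For the counterexample, the paper works in $L^2(\R)$ with the position spectral measure and takes $f$ to be (the minimum of $1$ and) the distance to a fat Cantor set $C$, so that ${\rm supp}(f)=\R$ while $\chi_C\in\ker f(Q)$; your example is a $2\times 2$ matrix with an eigenvalue sitting at a boundary zero of $f$. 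Yours is far more elementary and fully suffices for the lemma as stated ("equality does not hold in general"), but note what the paper's example buys: it shows the failure persists even when $\sfa$ is absolutely continuous with respect to Lebesgue measure, i.e.\ the gap need not be carried by point masses. That is the case relevant to the paper's later applications to convolved position and momentum effects, where the spectral measures have no eigenvalues, so your eigenvalue-based example would not cover the situation the authors actually care about.
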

\begin{proof}
We have $\ip{\fii}{E \fii}= \int f\, d\sfa_{\fii,\fii} \leq\int_{{\rm supp}(f)} d\sfa_{\fii,\fii}= \ip{\fii}{\sfa({\rm supp}( f))\fii}$
for all $\varphi\in \mathcal H$, so if $\varphi$ is orthogonal to $\sfa({\rm supp}( f))(\mathcal H)$ then $\varphi\in \ker E = \mathcal H_E^\perp = P_E^\perp(\mathcal H)$. This proves the first statement. 

Let $\sfa$ be $\sfq$, the canonical spectral measure on $L^2(\R)$, 
 $C$  a Cantor set with  positive measure, and $f(x)$ the minimum of 1 and the distance of $x$ from $C$.
Then $f$ is a continuous nonnegative function $f$ such that ${\rm supp}(f)=\mathbb R$ and thus 
$\sfa({\rm supp}(f))=I$.
However, the characteristic function $\chi_C\in \ker f(A)\setminus\{0\}$ so that $P_{f(A)}\ne I$. 
\end{proof}

Note that for an arbitrary effect $E$, an application of the lemma with $\sfa =\sfe^E$ and $f(x) =x$, $x\in[0,1]$, $f(x)=0$, $x\in\R\setminus[0,1]$, is consistent with the  fact $P_E = \sfe^E([0,1])$, but does not provide any more information -- as noted above, interesting cases arise with nontrivial functions.

\subsection{Effect order and pure operations}\label{alaotsikko26}
The order of effects has no direct relation to the order of the operations implementing them. Clearly, if  $A\leq E$ then for any fixed state $\sigma$, the operations $\Phi^A_\sigma(\rho)=\tr{\rho A}\sigma$ and $\Phi^E_\sigma(\rho)=\tr{\rho E}\sigma$, defining $A$ and $E$, respectively, are  also ordered $\Phi^A_\sigma\leq\Phi^E_\sigma$. However, these operations are maximally noisy in the sense that the normalised
post-measurement state (with or without conditioning on a specific outcome) is always the fixed  state $\sigma$, which is unrelated to the effects $A$ and $E$ under consideration. In the other extreme, there are the L\"uders  operations associated with  the ideal, first kind, repeatable measurements of discrete sharp observables, the operations of the form $\Phi_L^P(\rho)= P\rho P$, $P\in\ph$. For discrete unsharp observables  their counterpart are the generalised L\"uders operations,  $\Phi^E_L(\rho)=E^{\frac 12}\rho E^{\frac 12}$, extensively studied also by Paul, see, e.g., \cite{PCompendium}.
These are a special case of the pure operations $\rho\mapsto K\rho K^*$, $K\in\lh$, defining an effect $E=K^*K$.
Since any operation can be written as a sequence of pure operations, one often argues that pure operations have the least amount
of classical noise.
 In any case, they are specific to the observables and hence imprint some
information on the measurement to the post-measurement state. In this sense they form the opposite of the  trivial operations  $\Phi_\sigma^E$.

\begin{proposition}\label{proppure} Let $\Lambda$ and $\Phi$ be two pure operations with $\Lambda^*(I)=A$ and $\Phi^*(I)=E$. The following are equivalent:
\begin{itemize}
\item[(i)] $A\leq E$;
\item[(ii)] there exists a pure operation $\Psi$ such that $\Lambda = \Psi\circ\Phi$.
\end{itemize}
\end{proposition}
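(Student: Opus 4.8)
The plan is to pass to Kraus form and read the effect order off as an operator inequality governed by Lemma \ref{douglas1}. First I would fix representations $\Lambda(\rho)=L\rho L^{*}$ and $\Phi(\rho)=M\rho M^{*}$ with $L,M\in\lh$; since $\Lambda$ and $\Phi$ are operations, $A=\Lambda^{*}(I)=L^{*}L$ and $E=\Phi^{*}(I)=M^{*}M$ are effects, so $L$ and $M$ are contractions.

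For (ii)$\Rightarrow$(i) purity plays no role. If $\Lambda=\Psi\circ\Phi$ for an operation $\Psi$, dualising gives $\Lambda^{*}=\Phi^{*}\circ\Psi^{*}$ on $\lh$, so with $\tilde A:=\Psi^{*}(I)\in\eh$ one has
$$A=\Lambda^{*}(I)=\Phi^{*}\big(\Psi^{*}(I)\big)=\Phi^{*}(\tilde A)\leq\Phi^{*}(I)=E,$$
using the positivity of $\Phi^{*}$ together with $\tilde A\leq I$.

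For (i)$\Rightarrow$(ii) I would invoke Lemma \ref{douglas1}: the hypothesis $A\leq E$ is precisely $L^{*}L\leq M^{*}M$, which is condition (i) of that lemma with its ``$K$'' taken to be $M$ and its ``$M$'' taken to be $L$ (all three Hilbert spaces in the lemma being $\hi$). The lemma then supplies a contraction $C\in\lh$ with $L=CM$, and I would set $\Psi(\rho):=C\rho C^{*}$. This is a pure operation since $C^{*}C\leq I$, and for every $\rho\in\th$ one has $(\Psi\circ\Phi)(\rho)=CM\rho M^{*}C^{*}=L\rho L^{*}=\Lambda(\rho)$, so $\Lambda=\Psi\circ\Phi$.

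The argument is short and I do not anticipate a real obstacle; the only point demanding care is notational bookkeeping, keeping the letters $K,M$ internal to Lemma \ref{douglas1} apart from the Kraus operators $L,M$ of the proposition. The conceptual content is simply that for pure operations the effect order $A\leq E$ coincides with the operator inequality $L^{*}L\leq M^{*}M$, which is exactly the situation that Lemma \ref{douglas1} was designed to handle.
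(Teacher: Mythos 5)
Your proposal is correct and follows essentially the same route as the paper: write both pure operations in Kraus form and apply Lemma \ref{douglas1} to the Kraus operators, with the contraction $C$ from that lemma furnishing the intermediate pure operation $\Psi=C(\,\cdot\,)C^{*}$. The paper's proof is a one-line version of exactly this argument (your explicit dualisation for (ii)$\Rightarrow$(i) is a harmless elaboration of the easy direction).
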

\begin{proof} Writing $\Lambda = M(\,\cdot\,)M^*$ and $\Phi = K(\,\cdot\,)K^*$ we can apply Lemma \ref{douglas1}; the contraction $C$ featuring in condition (ii) determines $\Psi =C(\,\cdot\,)C^*$.
\end{proof}

\subsection{Common lower bounds for a pair of effects}
In preparation for the discussion on complementarity in the next section, we now consider joint lower bounds for pairs of effects.

For any two effects $E,\,F\in\eh$, we let ${\rm l.b.}\{E,F\}=\{A\in\eh\,|\, A\leq E, \,A\leq F\}$ denote the set of their common lower bounds, and similarly 
 for the operations $\Phi,\,\Psi\in\oh$. In neither case does their meet, the greatest lower bound, typically exist.\footnote{A characterization of the existence of the infimum  of effects is established in \cite{DU_etal2006}. In particular, 
if one of the effects is a projection, then their meet  exists \cite{Moreland_Gudder1999}.} This is the case even if  $E$ and $F$ are compatible, in which case they are of the form   $E=\sfe(X)$ and $F=\sfe(Y)$ for some  observable $\sfe$ and thus  $\sfe(X\cap Y)$ is a common lower bound of them; still, ${\rm inf}\{E,F\}=E\land F$ need not exist in $\eh$.
The characterization of the set ${\rm l.b.}\{E,F\}$, and especially the case  ${\rm l.b.}\{E,F\}=\{0\}$  is the key issue of this study. Proposition \ref{Paul_Stan} gives the following:\footnote{An earlier version of  Proposition \ref{Paul_Stan} together with the equivalence of  \eqref{lowerbounds} and  \eqref{Paulin_ehto} was obtained in \cite{P2}.}
\begin{corollary}\label{GenJauch}
For any two effects $E,$ $F\in\eh$ the following  conditions are equivalent:
\begin{eqnarray}
{\rm l.b.}\{E,F\}&\ne& \{0\};\label{lowerbounds}\\
{\rm ran}\,E^{\frac 12}\cap{\rm ran}\,F^{\frac 12}&\ne& \{0\}.\label{Paulin_ehto}
\end{eqnarray}
\end{corollary}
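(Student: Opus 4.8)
The plan is to prove the two implications separately, using Proposition \ref{Paul_Stan} as the bridge between common lower bounds and the intersection of the ranges of the square roots.

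First I would prove \eqref{Paulin_ehto} $\Rightarrow$ \eqref{lowerbounds}. Suppose there is a nonzero $\fii\in {\rm ran}\,E^{\frac 12}\cap {\rm ran}\,F^{\frac 12}$; without loss of generality we may normalise so that $\no\fii\le 1$. By Proposition \ref{Paul_Stan} applied to $E$, there is $\lambda_1>0$ with $\lambda_1\kb\fii\fii\le E$, and applied to $F$, there is $\lambda_2>0$ with $\lambda_2\kb\fii\fii\le F$. Taking $\mu=\min\{\lambda_1,\lambda_2\}>0$ and then rescaling $\fii$ once more if necessary (or simply noting $\mu\kb\fii\fii$ is a nonzero positive operator bounded by $I$ after scaling $\fii$ down), we obtain a nonzero weak atom $A=\mu\kb\fii\fii\in\eh$ with $A\le E$ and $A\le F$, so $A\in {\rm l.b.}\{E,F\}\setminus\{0\}$.

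For the converse \eqref{lowerbounds} $\Rightarrow$ \eqref{Paulin_ehto}, take a nonzero $A\in {\rm l.b.}\{E,F\}$. Since $A\ne 0$ is a positive operator, it dominates some nonzero weak atom: pick $\fii\in {\rm ran}\,A^{\frac 12}$ nonzero (e.g.\ $\fii=A^{\frac 12}\psi$ for suitable $\psi$), so that by Proposition \ref{Paul_Stan} (now in the direction \eqref{Paulin_ehto}$\Rightarrow$\eqref{lowerbounds} read for the pair $A$ and $A$, or simply directly since $\fii\in{\rm ran}\,A^{\frac 12}$) there is $\lambda>0$ with $\lambda\kb\fii\fii\le A$. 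Then $\lambda\kb\fii\fii\le A\le E$ and $\lambda\kb\fii\fii\le A\le F$, so by the forward direction of Proposition \ref{Paul_Stan} applied to $E$ and to $F$ we get $\fii\in {\rm ran}\,E^{\frac 12}$ and $\fii\in {\rm ran}\,F^{\frac 12}$, hence $\fii\in {\rm ran}\,E^{\frac 12}\cap {\rm ran}\,F^{\frac 12}$, which is nonzero.

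The only genuinely delicate point is the passage from an arbitrary nonzero common lower bound $A$ to a nonzero weak atom below it; this is handled by observing that ${\rm ran}\,A^{\frac 12}$ is nontrivial whenever $A\ne 0$, so Proposition \ref{Paul_Stan} (with $E$ replaced by $A$) immediately produces $\lambda\kb\fii\fii\le A$ for any nonzero $\fii\in{\rm ran}\,A^{\frac12}$. Everything else is a routine application of Proposition \ref{Paul_Stan} in both directions together with transitivity of the order relation, plus harmless normalisation of $\fii$ to keep $\kb\fii\fii$ an effect. I would therefore write the proof in just a few lines, citing Proposition \ref{Paul_Stan} twice in each direction.
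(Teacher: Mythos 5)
Your proof is correct and follows exactly the route the paper intends: the corollary is stated there as an immediate consequence of Proposition \ref{Paul_Stan}, and your two implications (producing a weak atom below a nonzero common lower bound $A$ via a nonzero $\fii\in{\rm ran}\,A^{\frac12}$, and conversely building the common lower bound $\mu\kb\fii\fii$ from a nonzero vector in the intersection of the ranges) are precisely the omitted details. No gaps; the normalisation and the observation that $\mu\kb\fii\fii\leq E\leq I$ makes it an effect are handled properly.
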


A direct study of \eqref{lowerbounds} and \eqref{Paulin_ehto} may, in general, be challenging, since the range of an effect need not be closed. In fact, while the condition $\hi_E\cap\hi_F=\{0\}$ (i.e.\ $P_E\land P_F=0$) clearly implies $E\land F=0$, the converse need not hold even under additional constraints, as will be demonstrated below by Proposition \ref{Jukan_esimerkki1}. In some cases, the following necessary condition, which follows directly from Lemma \ref{necessary1}, is more tractable.

\begin{lemma}\label{necessary}
Let $E=f(A)$, $F=g(B)$ where $A$ and $B$ are selfadjoint operators given by (real) spectral measures $\sfa$ and $\sfb$, and $f:\R\to[0,1]$ and $g:\R\to[0,1]$ are measurable. If $\sfa({\rm supp}(f))\land\sfb({\rm supp}(g))=0$, then ${\rm l.b.}\{E,F\}=\{0\}$.
\end{lemma}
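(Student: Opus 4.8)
The plan is to reduce the statement to the already-established Lemma \ref{necessary1}. Suppose $A\in{\rm l.b.}\{E,F\}$, so $0\le A\le E=f(\sfa)$ and $0\le A\le F=g(\sfb)$. Applying Lemma \ref{necessary1} to $E=f(A)$ with the spectral measure $\sfa$ and the function $f$ gives $P_E\le\sfa({\rm supp}(f))$, and similarly $P_F\le\sfb({\rm supp}(g))$. The point is then to note that $A\le E$ forces the support projection of $A$ to sit below that of $E$: indeed $A\le E$ implies $\ker E\subset\ker A$ (if $E\fii=0$ then $\ip{\fii}{A\fii}\le\ip{\fii}{E\fii}=0$, so $A\fii=0$ since $A\ge0$), hence $P_A=P_{\overline{{\rm ran}\,A}}\le P_{\overline{{\rm ran}\,E}}=P_E$. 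Therefore $P_A\le P_E\le\sfa({\rm supp}(f))$ and, by the same argument applied to $F$, $P_A\le P_F\le\sfb({\rm supp}(g))$. Consequently $P_A$ is a projection below both $\sfa({\rm supp}(f))$ and $\sfb({\rm supp}(g))$, so $P_A\le\sfa({\rm supp}(f))\land\sfb({\rm supp}(g))=0$ by hypothesis. Thus $P_A=0$, which means $A=0$, and so ${\rm l.b.}\{E,F\}=\{0\}$.

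The key steps, in order: first, record the elementary fact that for positive operators $0\le A\le E$ one has $\ker E\subset\ker A$ and hence $P_A\le P_E$ (this is a one-line consequence of positivity); second, invoke Lemma \ref{necessary1} twice to bound $P_E$ and $P_F$ by the spectral projections of the supports of $f$ and $g$; third, chain the inequalities to conclude $P_A\le\sfa({\rm supp}(f))\land\sfb({\rm supp}(g))$; fourth, use the hypothesis to kill $P_A$ and therefore $A$. Each ingredient is already available in the excerpt or is routine, so the proof is short.

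I do not anticipate a genuine obstacle here; the only point worth a moment's care is that the meet $\sfa({\rm supp}(f))\land\sfb({\rm supp}(g))$ is taken in the projection lattice (equivalently in $\eh$, by the remark in the text that meets of projections agree in $\ph$ and in $\eh$), and that a projection dominated by a projection $Q$ is automatically $\le Q$ in the lattice sense — both facts are standard. One should also make explicit that $P_A=0$ really does give $A=0$, which is immediate since $A=P_AAP_A$. Everything else is bookkeeping.

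\begin{proof}
Let $A\in{\rm l.b.}\{E,F\}$. Since $0\le A\le E$ and $A\ge0$, for any $\fii\in\ker E$ we have $0\le\ip{\fii}{A\fii}\le\ip{\fii}{E\fii}=0$, so $A^{\frac 12}\fii=0$ and hence $A\fii=0$; thus $\ker E\subset\ker A$, which gives $\hi_A=\overline{{\rm ran}\,A}\subset\overline{{\rm ran}\,E}=\hi_E$, i.e.\ $P_A\le P_E$. By Lemma \ref{necessary1} applied to $E=f(A)=\int f\,d\sfa$, we have $P_E\le\sfa({\rm supp}(f))$, and therefore $P_A\le\sfa({\rm supp}(f))$. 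The same argument with $F$ in place of $E$ yields $P_A\le P_F\le\sfb({\rm supp}(g))$. Hence $P_A$ is a projection with $P_A\le\sfa({\rm supp}(f))$ and $P_A\le\sfb({\rm supp}(g))$, so $P_A\le\sfa({\rm supp}(f))\land\sfb({\rm supp}(g))=0$ by hypothesis. Thus $P_A=0$, and since $A=P_AAP_A$ we get $A=0$. Therefore ${\rm l.b.}\{E,F\}=\{0\}$.
\end{proof}
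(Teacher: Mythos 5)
Your proof is correct and is precisely the argument the paper has in mind: the text gives no separate proof of Lemma \ref{necessary}, stating only that it ``follows directly from Lemma \ref{necessary1}'', and your chain $P_A\le P_E\le\sfa({\rm supp}(f))$, $P_A\le P_F\le\sfb({\rm supp}(g))$, hence $P_A\le\sfa({\rm supp}(f))\land\sfb({\rm supp}(g))=0$, is the intended elaboration. The only blemish is notational: you reuse the letter $A$ both for the common lower bound and for the selfadjoint operator with $E=f(A)$, so rename one of them.
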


\section{Complementary observables}\label{compl}

Intuitively, observables are complementary if  the experimental arrangements allowing their unambiguous definitions are  mutually exclusive.
With the full machinery of quantum mechanics,
 one may formalise the concept of `experimental arrangement unambiguously defining an observable' 
using either  the measurement schemes  defining the observable, the instruments implementing it, or just the observable, itself.
We follow \cite{BugL1980} and \cite{P3} to express the idea of  `mutual exclusiveness of experimental arrangements' in terms of the order structure of the sets of effects and operations.

\subsection{A test of a binary observable}
Consider any two  observables $\sfe$ and $\sff$ with the outcome $\sigma$-algebras $\hA$ and $\hB$.
If  the set ${\rm l.b.}\{\sfe(X),\sff(Y)\}\ne \{0\}$ for some $X$ and $Y$,
then for any (nozero) effect $A$ which is below $\sfe(X)$ and $\sff(Y)$, the yes-outcome  1 of a yes-no measurement of the dichotomic observable
$\sfa$, with $\sfa(1)=A,$ $\sfa(0)=I-A=A^\perp$, 
gives probabilistic information on both of the effects $\sfe(X)$ and $\sff(Y)$. If the effects $\sfe(X)$ and $\sff(Y)$ are disjoint, that is, $\sfe(X)\land\sff(Y)=0$,
equivalently, $\Phi\land\Psi=0$ for any $\Phi\in\sfe(X),\;\Psi  \in\sff(Y)$, 
 then no such measurements exist.  We elaborate next the operational context  of this idea a bit further.

Let $\Qo$ be a binary observable (a yes/no question) with outcomes $1$ and $0$, where $1$ denotes the yes-answer. Suppose we have another binary $1/0$--observable $\Ao$ such that
\begin{itemize}
\item[(1)] $\Ao$ can be measured jointly with $\Qo$;
\item[(2)] the $1$-outcome of $\Ao$ serves as a definite indicator for the $1$-outcome of $\Qo$, that is, the latter occurs with certainty given that the former is $1$, for any state of the system;
\item[(3)] the ``indicator'' outcome $1$ of $\Ao$ has nonzero probability at least for some state.
\end{itemize}
We call such an observable \emph{a test for $\Qo$}.

For any observable $\Eo$ and $X\in \mathcal A$, we let $\Qo_{\Eo,X}$ denote the binary coarse-graining of $\Eo$ corresponding to the question of whether the outcome lies in $X$, that is, $\Qo_{\Eo,X}(1) = \Eo(X)$ and $\Qo_{\Eo,X}(0) = \id -\Eo(X)$. The following simple observation follows readily from the definition. 

\begin{proposition}\label{char1} Let $\Qo$ and $\Ao$ be binary $1/0$--observables. The following are equivalent:
\begin{itemize}
\item[(i)] $\Ao$ is a test for $\Qo$;
\item[(ii)] there exists an observable $\Eo$ with outcome $\sigma$-algebra $\mathcal A$, and sets $Y,\,X\in \mathcal A$, $Y\subset X$, such that $\Eo(Y)\neq 0$ and $\Qo_{\Eo,X}=\Qo$ and $\Qo_{\Eo,Y}=\Ao$; 
\item[(iii)] $\Ao(1)\leq \Qo(1)$;
\item[(iv)] there exists a contraction $C$ such that $\sqrt{\Ao(1)}=\sqrt{\Qo(1)}\,C$.
\end{itemize}
\end{proposition}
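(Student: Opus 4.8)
The plan is to prove the equivalence by establishing the cycle (iii)$\Leftrightarrow$(iv), (iii)$\Rightarrow$(ii), (ii)$\Rightarrow$(i), and (i)$\Rightarrow$(iii). The equivalence of (iii) and (iv) is immediate from Lemma \ref{douglas1} applied with $K=\sqrt{\Qo(1)}$, $M=\sqrt{\Ao(1)}$: condition (i) of that lemma becomes $\Ao(1)\leq\Qo(1)$, and condition (ii) of that lemma provides exactly the contraction $C$ with $\sqrt{\Ao(1)}=\sqrt{\Qo(1)}\,C$. (Strictly, Lemma \ref{douglas1}(ii) also records $\ker\sqrt{\Qo(1)}^*\subset\ker C$, but since we only need existence of \emph{some} contraction $C$ for (iv), this extra information is harmless.)

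Next I would handle (iii)$\Rightarrow$(ii) by an explicit construction. Given $\Ao(1)\leq\Qo(1)$, set $\Eo$ to be a three-outcome observable on the value space $\{a,b,c\}$ (so $\mathcal A=2^{\{a,b,c\}}$) defined by $\Eo(\{a\})=\Ao(1)$, $\Eo(\{b\})=\Qo(1)-\Ao(1)$, $\Eo(\{c\})=\id-\Qo(1)$; these are effects by the assumed ordering, and they sum to $\id$. Take $X=\{a,b\}$ and $Y=\{a\}$, so $Y\subset X$, $\Eo(X)=\Qo(1)$ hence $\Qo_{\Eo,X}=\Qo$, and $\Eo(Y)=\Ao(1)$ hence $\Qo_{\Eo,Y}=\Ao$. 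The only subtlety is the requirement $\Eo(Y)\neq0$: if $\Ao(1)=0$ then $\Ao$ is the trivial ``always-no'' observable, which vacuously satisfies conditions (1)--(2) of being a test but fails (3); so I would first note that the case $\Ao(1)=0$ must be excluded — reading condition (iii) together with the standing assumption in (i) that $\Ao$ has a genuine yes-outcome. Actually, re-examining the proposition: (iii) as literally stated would be satisfied by $\Ao(1)=0$, so I expect the intended reading is that all four conditions implicitly concern an $\Ao$ with $\Ao(1)\neq0$, or else (iii) should be read as ``$0\neq\Ao(1)\leq\Qo(1)$''. I would insert a sentence clarifying this and then (ii) follows.

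For (ii)$\Rightarrow$(i): given such $\Eo$, $X$, $Y$, jointly measuring $\Ao=\Qo_{\Eo,Y}$ and $\Qo=\Qo_{\Eo,X}$ is achieved by the single measurement of $\Eo$ (postprocessing its outcome into the pair of bits ``in $Y$?'' and ``in $X$?''), giving (1); since $Y\subset X$, the outcome lying in $Y$ forces it to lie in $X$, so the $1$-outcome of $\Ao$ entails the $1$-outcome of $\Qo$ with certainty in every state, giving (2); and $\Eo(Y)\neq0$ means $\tr{\rho\Eo(Y)}>0$ for some state $\rho$, giving (3). For (i)$\Rightarrow$(iii): if $\Ao$ is a test for $\Qo$, property (2) says that for every state $\rho$, $\tr{\rho\Ao(1)}>0$ implies the conditional probability of the $\Qo$-$1$ outcome is $1$; using that $\Ao$ and $\Qo$ admit a joint observable $\Go$ on $\{0,1\}^2$ with $\Go(\{(1,0)\})\geq0$ and $\tr{\rho\Go(\{(1,0)\})}=\tr{\rho\Ao(1)}-\tr{\rho\,\Go(\{(1,1)\})}$, property (2) forces $\tr{\rho\Go(\{(1,0)\})}=0$ for all $\rho$, hence $\Go(\{(1,0)\})=0$, hence $\Ao(1)=\Go(\{(1,1)\})\leq\Go(\{1\}\times\{0,1\})$... more directly: $\Ao(1)=\Go(\{(1,0)\})+\Go(\{(1,1)\})=\Go(\{(1,1)\})\leq\Go(\{0,1\}\times\{1\})=\Qo(1)$. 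The main obstacle is pinning down the precise operational meaning of property (2) — ``occurs with certainty given that'' — as a statement about the joint observable that is strong enough to yield the operator inequality; I would state it as: the joint effect $\Go(\{(1,0)\})$ (yes for $\Ao$, no for $\Qo$) is the zero operator, and argue that this is the correct formalisation, after which the inequality $\Ao(1)\leq\Qo(1)$ drops out of the marginal relations.
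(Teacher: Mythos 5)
Your proof is correct and follows essentially the same route as the paper's: (i)$\Leftrightarrow$(iii) via the marginals of a joint observable, (ii)$\Leftrightarrow$(iii) via the three-outcome observable $\{\Ao(1),\Qo(1)-\Ao(1),\Qo(0)\}$, and (iii)$\Leftrightarrow$(iv) via Lemma~\ref{douglas1}. Your observation that the degenerate case $\Ao(1)=0$ satisfies (iii) and (iv) but not (i) and (ii) is a fair catch --- the paper's proof addresses this only implicitly, through the remark that the support projection of $\Ao(1)$ is nonzero by condition (3) --- and your suggested reading ($\Ao(1)\neq 0$ as a standing assumption) is the intended one.
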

\begin{proof} (i)$\Leftrightarrow$(iii): If (i) holds then by the joint measurability of $\sfa$ and $\Qo$  there are four effects $G_{00}$, $G_{10}$, $G_{01}$, $G_{11}$ summing to identity, such that $G_{10}+G_{11}=\Qo(1)$ and $G_{01}+G_{11}=\Ao(1)$, where the first and second indices refer to outcomes of $\Qo$ and $\Ao$, respectively. This implies that $G_{11}\leq \Ao(1)\leq P$ where $P$ is the projection onto the support subspace of $\Ao(1)$. Then $P\neq 0$ by the condition (3), and $G_{11}$ and $\Ao(1)$ are determined by states with support in $P$. Since the conditional probability condition (2) reads ${\rm tr}[G_{11}\rho]/{\rm tr}[\Ao(1)\rho]=1$ for any such state, we must have $G_{11}=\Ao(1)$, so that $\Qo(1)-\Ao(1)=G_{10}\geq 0$, that is, (iii) is true. Conversely, if (iii) holds then $\Ao$ and $\Qo$ have the joint observable $G_{11}=\Ao(1)$, $G_{10}=\Qo(1)-\Ao(1)$, $G_{01}=0$, $G_{00}=\id -\Qo(1)$, and the conditional probability condition is satisfied with similar remarks on the support. (ii)$\Leftrightarrow$(iii): If (iii) holds the three-outcome observable $\Eo =\{\Ao(1), \Qo(1)-\Ao(1), \Qo(0)\}$ satisfies the requirements of (ii), and if (ii) holds then $\Ao(1) = \Eo(Y)\leq \Eo(X)= \Qo(1)$ so (iii) holds as well. The equivalence of (iii) and (iv) follows from Lemma \ref{douglas1}.
\end{proof}

\subsection{The definition of complementarity as a lack of joint tests}

We are now ready to state the definition of complementarity, and give a basic characterisation based on the results of the preceding section.

\begin{definition}\rm 
Let $\Eo$ and $\Fo$ be two observables with outcome $\sigma$-algebras $\mathcal A$ and $\mathcal B$. Given $X\in \mathcal A$ and $Y\in \mathcal B$, the observables  $\Eo$ and $\Fo$ are \emph{$(X,Y)$-complementary} if $\Qo_{\Eo,X}$ and $\Qo_{\Fo,Y}$ have no common tests. Given $\mathcal A_0\subset \mathcal A$ and $\hB_0\subset\hB$, we say that $\Eo$ and $\Fo$ are $(\mathcal A_0,\mathcal B_0)$-\emph{complementary} or, briefly,
\emph{complementary} (if $\hA_0$ and $\hB_0$ are clear from context\footnote{See discussion on the choice of $\hA_0$ and $\hB_0$ below.}),
if they are $(X,Y)$-complementary for each $X\in\mathcal A_0$ and $Y\in \mathcal B_0$. 
\end{definition}

\begin{theorem}\label{seuraus1} Observables $\sfe:\hA\to\lh$ and $\sff:\hB\to\lh$ are  $(\hA_0,\hB_0)$-complementary, if and only if one of the equivalent conditions below hold:
\begin{itemize}
\item[(i)] for any $X\in\hA_0,$ $Y\in\hB_0$,  
the effects $\sfe(X)$ and $\sff(Y)$ are disjoint, that is, $\sfe(X)\land\sff(Y)=0$;
\item[(ii)] ${\rm ran}\sqrt{\sfe(X)}\cap{\rm ran}\sqrt{\sff(Y)}=\{0\}$ for all $X\in\hA_0,$ $Y\in\hB_0$;
\item[(iii)] any two instruments $\hI\in\sfe$ and $\hJ\in\sff$ are mutually exclusive with respect to $\hA_0$ and $\hB_0$, that is, $\hI(X)\land\hJ(Y)=0$ for all $X\in\hA_0,$ $Y\in\hB_0$; 
\item[(iv)] for any pure operations $\Phi\in \sfe(X), \; \Psi\in \sff(Y)$, there exist no pure operations $\Lambda_1,\;\Lambda_2$ such that $\Lambda_1\circ\Phi=\Lambda_2\circ\Psi$.
\end{itemize}
\end{theorem}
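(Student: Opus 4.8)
The plan is to establish Theorem~\ref{seuraus1} by first reducing complementarity to a concrete order-theoretic condition and then cycling through the equivalences (i)--(iv), relying heavily on the factorisation machinery already developed. The starting point is the definition: $\Eo$ and $\Fo$ are $(\hA_0,\hB_0)$-complementary precisely when, for each $X\in\hA_0$ and $Y\in\hB_0$, the binary coarse-grainings $\Qo_{\Eo,X}$ and $\Qo_{\Fo,Y}$ admit no common test. So the first step is to characterise, for two fixed binary observables $\Qo$ and $\Qo'$, when they have \emph{no} common test. By Proposition~\ref{char1}, an observable $\Ao$ is a test for $\Qo$ if and only if $0\neq\Ao(1)\leq\Qo(1)$ (the nonzero condition being implicit in point (3) of the definition of a test); likewise $\Ao$ is a test for $\Qo'$ iff $0\neq\Ao(1)\leq\Qo'(1)$. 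Hence a common test exists iff there is a nonzero effect $A$ with $A\leq\Qo(1)$ and $A\leq\Qo'(1)$, i.e.\ iff $\mathrm{l.b.}\{\Qo(1),\Qo'(1)\}\neq\{0\}$. Applying this with $\Qo(1)=\sfe(X)$ and $\Qo'(1)=\sff(Y)$, the negation gives: $(X,Y)$-complementarity holds iff $\mathrm{l.b.}\{\sfe(X),\sff(Y)\}=\{0\}$, which is exactly condition~(i). This settles the equivalence of the definition with~(i), and (i)$\Leftrightarrow$(ii) is then immediate from Corollary~\ref{GenJauch} (equivalence of \eqref{lowerbounds} and \eqref{Paulin_ehto}), applied pointwise in $X$ and $Y$.

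For (i)$\Leftrightarrow$(iii), the key observation is that an operation $\Phi\in\sfe(X)$ always satisfies $\Phi^*(I)=\sfe(X)$, and more generally for any state $\rho$ one has $\mathrm{tr}[\Phi(\rho)]=\mathrm{tr}[\rho\,\sfe(X)]$. The easy direction is (iii)$\Rightarrow$(i): given instruments $\hI\in\sfe$, $\hJ\in\sff$, the operations $X\mapsto\hI(X)$, $Y\mapsto\hJ(Y)$ implement $\sfe(X)$, $\sff(Y)$, and if $A\leq\sfe(X)$, $A\leq\sff(Y)$ with $A\neq0$, one can build the trivial operations $\Phi^A_\sigma(\rho)=\mathrm{tr}[\rho A]\sigma$ which sit below $\hI(X)$ and $\hJ(Y)$ in the operation order (using that trivial operations ordered by their effects are ordered as operations, as noted in the paper), contradicting $\hI(X)\land\hJ(Y)=0$ — so one must be slightly careful here and instead argue at the level of effects: $\hI(X)\land\hJ(Y)=0$ forces $\sfe(X)\land\sff(Y)=0$ because any common lower bound effect $A$ yields a common lower bound operation. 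For (i)$\Rightarrow$(iii) one shows that if $\Theta\leq\hI(X)$ and $\Theta\leq\hJ(Y)$ as operations, then $\Theta^*(I)\leq\hI(X)^*(I)=\sfe(X)$ and similarly $\Theta^*(I)\leq\sff(Y)$, hence $\Theta^*(I)=0$, hence $\Theta=0$ since an operation with zero effect is the zero operation. The only subtlety is the precise meaning of $\land$ on operations — presumably the infimum in the natural order on $\oh$ — and that $\Theta\leq\Phi$ for operations implies $\Theta^*(I)\leq\Phi^*(I)$ for the associated effects, which is straightforward from positivity.

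For (i)$\Leftrightarrow$(iv), I would use Proposition~\ref{proppure}: a pure operation $\Lambda_1$ with $\Lambda_1\circ\Phi$ a pure operation implementing some effect $A$ forces, via $A=(\Lambda_1\circ\Phi)^*(I)$ and the composition structure, that $A\leq\Phi^*(I)=\sfe(X)$; symmetrically $A\leq\sff(Y)$. Writing $\Phi=K(\cdot)K^*$, $\Psi=M(\cdot)M^*$, $\Lambda_1=C_1(\cdot)C_1^*$, $\Lambda_2=C_2(\cdot)C_2^*$, the equation $\Lambda_1\circ\Phi=\Lambda_2\circ\Psi$ reads $C_1K(\cdot)(C_1K)^*=C_2M(\cdot)(C_2M)^*$, so the common operation is pure with Kraus operator $L:=C_1K$ (up to a unitary), and $L^*L\leq K^*K=\sfe(X)$ while $L^*L\leq M^*M=\sff(Y)$; nonzero $\Lambda_1,\Lambda_2$ and nonzero $\Phi,\Psi$ ensure $L\neq0$, hence $\mathrm{l.b.}\{\sfe(X),\sff(Y)\}\neq\{0\}$. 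Conversely, if $A=L^*L$ is a nonzero common lower bound, Lemma~\ref{douglas1} produces contractions $C_1,C_2$ with $L=C_1K$ ... wait, that is not quite the shape needed; rather one takes \emph{any} pure $\Phi\in\sfe(X)$, $\Psi\in\sff(Y)$ and uses the factorisations $\sqrt{A}=\sqrt{\sfe(X)}\,D_1=\sqrt{\sff(Y)}\,D_2$ from Proposition~\ref{char1}(iv)/Lemma~\ref{douglas1} together with polar decompositions of $K$ and $M$ to construct $\Lambda_1,\Lambda_2$ — this direction requires relating the square-root factorisation of a generic pure operation $\Phi=K(\cdot)K^*$ (where $K=V|K|$ with $|K|=\sqrt{\sfe(X)}$) to an arbitrary one, so the polar parts $V$ must be absorbed into $\Lambda_1$. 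I expect this last construction — producing explicit $\Lambda_1,\Lambda_2$ from a common lower bound for \emph{arbitrary} prescribed pure $\Phi,\Psi$, rather than for the canonical square-root ones — to be the main technical obstacle, though it is purely a matter of bookkeeping with polar decompositions and Lemma~\ref{douglas1}; everything else follows mechanically from Corollary~\ref{GenJauch}, Proposition~\ref{char1}, and Proposition~\ref{proppure}.
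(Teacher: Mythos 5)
Your overall route is the one the paper intends: Theorem \ref{seuraus1} is stated without a separate proof precisely because it is meant to be assembled from Proposition \ref{char1} (definition $\Leftrightarrow$ (i)), Corollary \ref{GenJauch} ((i) $\Leftrightarrow$ (ii)), and Proposition \ref{proppure} ((i) $\Leftrightarrow$ (iv)), and that is exactly what you do. Two places need tightening, though neither is fatal.

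First, in (iii) $\Leftrightarrow$ (i) your opening claim that $\Phi^A_\sigma$ ``sits below $\hI(X)$ and $\hJ(Y)$'' for \emph{arbitrary} instruments $\hI\in\sfe$, $\hJ\in\sff$ is false: by the Radon--Nikodym theorem for completely positive maps, the only operations dominated by a pure operation $K(\,\cdot\,)K^*$ are its scalar multiples, so a trivial operation is generally not below a L\"uders instrument even when the effects are ordered. You sense this (``one must be slightly careful here''), but the clean fix should be stated: since (iii) quantifies over \emph{all} instruments, it suffices to apply it to the trivial instruments $X\mapsto\Phi^{\sfe(X)}_\sigma$ and $Y\mapsto\Phi^{\sff(Y)}_\sigma$, for which a nonzero common lower bound effect $A$ does yield the nonzero common lower bound operation $\Phi^A_\sigma$; this gives (iii) $\Rightarrow$ (i), while (i) $\Rightarrow$ (iii) is your (correct) argument that $\Theta\leq\hI(X)$ and $\Theta\leq\hJ(Y)$ force $\Theta^*(I)\leq\sfe(X)\wedge\sff(Y)=0$ and hence $\Theta=0$. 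Second, the step you flag as ``the main technical obstacle'' in (i) $\Rightarrow$ $\neg$(iv) does not exist: Proposition \ref{proppure} is already stated for an \emph{arbitrary} pure operation $\Phi$ with $\Phi^*(I)=E$ (its proof applies Lemma \ref{douglas1} to the general Kraus operators $M$ and $K$, not to the positive square roots), so given a nonzero common lower bound $A$ one simply takes $\Lambda=A^{\frac12}(\,\cdot\,)A^{\frac12}$ and invokes Proposition \ref{proppure} twice to obtain pure $\Lambda_1,\Lambda_2$ with $\Lambda_1\circ\Phi=\Lambda=\Lambda_2\circ\Psi$ for whatever pure $\Phi\in\sfe(X)$, $\Psi\in\sff(Y)$ are prescribed. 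No polar decompositions are needed. With these two repairs the proof is complete and coincides with the paper's.
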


Regarding the choice of $\hA_0$ and $\hB_0$, the naive choice $\mathcal A_0=\mathcal A$ and $\mathcal B_0=\mathcal B$ obviously leads to a trivial notion, as the identity operator has a joint lower bound with any effect. Merely excluding the identity would still be too strong a requirement, restricting complementarity essentially only to dichotomic observables. However, for an unambiguous definition of  an observable $\sfe:\hA\to\lh$ one does not need all its effects -- indeed, using polarisation and the 
Carath\'eodory extension theorem we  see that 
it suffices to specify the effects $\sfe(X),$ $X\in\hR$, for some semiring $\hR\subset\hA$ which generates $\hA$ and  covers $\Omega$ in the sense of a 
countable union (of sets that can, moreover, be required to be disjoint, as one can easily show).\footnote{We recall that $\hS\subset 2^\Omega$ is a semiring if $\emptyset\in\hS$,  for all $X,Y\in\hS$, $X\cap Y\in\hS$, and $X\setminus Y$ is the union of finite number of disjoint sets  belonging to $\hS$.} In concrete examples this allows one to choose the sets $\hA_0$ and $\hB_0$ such that they contain such generating covering semirings. With this restriction, complementarity becomes a special case of quantum incompatibility:
\begin{proposition} Complementary observables have no joint measurements.
\end{proposition}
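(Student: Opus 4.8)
The plan is to prove the contrapositive: if $\Eo$ and $\Fo$ admit a joint measurement, then they fail to be $(\hA_0,\hB_0)$-complementary. Suppose $\Go:\hA\otimes\hB\to\lh$ is a joint observable, so that $\Eo(X) = \Go(X\times\Omega_2)$ and $\Fo(Y) = \Go(\Omega_1\times Y)$ for all $X\in\hA$, $Y\in\hB$. The natural candidate for a common lower bound of $\Eo(X)$ and $\Fo(Y)$ is $\Go(X\times Y)$: since $X\times Y\subset X\times\Omega_2$ and $X\times Y\subset\Omega_1\times Y$, monotonicity of the positive operator measure $\Go$ gives $\Go(X\times Y)\leq\Eo(X)$ and $\Go(X\times Y)\leq\Fo(Y)$, so $\Go(X\times Y)\in{\rm l.b.}\{\Eo(X),\Fo(Y)\}$. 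By Theorem \ref{seuraus1}(i), it then suffices to exhibit one pair $X\in\hA_0$, $Y\in\hB_0$ with $\Go(X\times Y)\neq 0$, since that rules out $\Eo(X)\land\Fo(Y) = 0$ for that pair.

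The substantive point is therefore to show that $\Go(X\times Y)$ cannot vanish for \emph{all} $X\in\hA_0$ and $Y\in\hB_0$. This is where the restriction on $\hA_0$ and $\hB_0$ discussed just before the proposition does the work: by assumption $\hA_0$ contains a semiring $\hR$ generating $\hA$ and covering $\Omega_1$ by a countable disjoint union $\Omega_1 = \bigcup_n X_n$ with $X_n\in\hR\subset\hA_0$, and likewise $\hB_0$ contains such a semiring covering $\Omega_2 = \bigcup_m Y_m$ with $Y_m\in\hB_0$. Then the sets $X_n\times Y_m$ are pairwise disjoint and cover $\Omega_1\times\Omega_2$, so $\sum_{n,m}\Go(X_n\times Y_m) = \Go(\Omega_1\times\Omega_2) = I$ in the weak (or strong) operator topology. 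Since the zero operator cannot sum to $I$, at least one $\Go(X_n\times Y_m)$ is nonzero, giving the required pair and completing the contrapositive.

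The main obstacle — really the only place care is needed — is making sure the chosen $\hA_0$, $\hB_0$ genuinely contain generating, covering semirings, but this is exactly the standing assumption motivated in the paragraph preceding the proposition, so in the proof one simply invokes it. A minor subtlety is that $\hR$ is a semiring rather than an algebra, so the covering is by countably many (disjoint) semiring elements rather than a single one; this is harmless since countable additivity of $\Go$ is all that is used. One could alternatively phrase the argument directly (not by contraposition): given a joint observable, pick the nonzero $\Go(X_n\times Y_m)$ as above and note it is a common lower bound, so condition (i) of Theorem \ref{seuraus1} fails and the observables are not complementary.
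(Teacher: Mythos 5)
Your proof is correct and follows essentially the same route as the paper: take countable disjoint covers $(X_n)\subset\hR\subset\hA_0$ and $(Y_m)\subset\hS\subset\hB_0$, observe that $\sum_{n,m}\Go(X_n\times Y_m)=I$ forces some $\Go(X_n\times Y_m)\neq 0$, and note that this effect is a common lower bound of $\Eo(X_n)$ and $\Fo(Y_m)$. The only difference is presentational (contrapositive phrasing and the explicit monotonicity remark), which the paper leaves implicit.
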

\begin{proof}
If $\sfm$ is a joint measurement of $\sfe$ and $\sff$ (with the semirings $\hR$ and $\hS$) and $(X_i)\subset\hR\subset\hA_0,$ $(Y_j)\subset\hS\subset\hB_0$ countable disjoint covers for $\Omega$ and $\Omega'$, then $I=\sfm(\Omega\times\Omega')=\sum_{i,j}\sfm(X_i\times Y_j)$, implying that $\sfm(X_i\times Y_j)\neq 0$ for some $(i,j)$, providing a joint lower bound for $\sfe(X_i)$ and $\sff(Y_j)$.
\end{proof}

We now discuss briefly the choice of $\mathcal A_0$ and $\mathcal B_0$ in two basic cases:
\subsubsection{Continuous case}\label{cont}
For real observables absolutely continuous with respect to the Lebesgue measure, one could choose $\Omega={\rm supp}(\sfe)\subset\R$ and 
$\hA_0\subset \hA=\br\cap{\rm supp}(\sfe)$ to consist of {\it bounded} Borel sets $X$ for which $\Omega\setminus X$ has nonzero Lebesgue measure. This choice excludes the identity, and satisfies the generating semiring condition. However, we could equally well include {\it all} Borel sets with the above restriction regarding the measure, leading to a different notion of complementarity. In fact, the canonical position-momentum pair is complementary in the former but not in the latter sense, as we will discuss later on.

\subsubsection{Discrete case}\label{discr}

If $\sfe$ is discrete (and nontrivial), the outcome set is essentially $\Omega=\{x_1,x_2,\ldots\}$ where $\Eo(x)\ne 0$ for all $x\in \Omega$, and $\sum_{x\in \Omega} \Eo(x)=I$.
In this case, $\hA=2^\Omega$, and the obvious generating semiring is $\big\{\{x\}\,\big|\, x\in\Omega\big\}$. Clearly, there are (at least) two natural choices: (1) $\hA_0$ consists of all finite proper subsets of $\Omega$, and (2) $\hA_0=\big\{\{x\}\,\big|\, x\in\Omega\big\}$.

The first choice will be relevant for the examples in Section \ref{applications}. Regarding the second one, Proposition \ref{proppure} yields an interesting characterisation in terms of conditional post-measurement states.
In fact, complementarity of $\sfe$ and $\sff$ excludes the possibility that these could be further post-processed into the same final conditional state:

\begin{proposition} Let $\sfe$ and $\sff$ be discrete observables with outcome sets $\Omega$ and $\Omega'$, and let $\mathcal A_0=\{\{x\}\mid x\in \Omega\}$ and $\mathcal B_0=\{\{y\}\mid y\in \Omega'\}$. Then $\sfe$ and $\sff$ are $(\hA_0,\hB_0)$-complementary if and only if 
their generalized  L\"uders instruments $\hI^L$ and $\hJ^L$ do not satisfy $\Psi_x\circ\hI^L(\{x\})=\Phi_y\circ\hJ^L(\{y\})$ for any pair $x\in \Omega$, $y\in \Omega'$, and 
 any pure operations $\Psi_x$ and $\Phi_y$.
\end{proposition}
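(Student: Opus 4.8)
The plan is to unwind both sides of the claimed equivalence to the basic factorisation of effects. By the definition of complementarity and Proposition~\ref{char1}, $\sfe$ and $\sff$ are $(\hA_0,\hB_0)$-complementary precisely when, for every $x\in\Omega$ and $y\in\Omega'$, the binary coarse-grainings $\Qo_{\sfe,\{x\}}$ and $\Qo_{\sff,\{y\}}$ have no common test, i.e.\ there is no nonzero effect $A$ with $A\leq\sfe(\{x\})$ and $A\leq\sff(\{y\})$; equivalently ${\rm l.b.}\{\sfe(\{x\}),\sff(\{y\})\}=\{0\}$ for all such $x,y$. So the left-hand side of the proposition is exactly the statement that $\sfe(\{x\})\land\sff(\{y\})=0$ for all $x\in\Omega$, $y\in\Omega'$ (cf.\ Theorem~\ref{seuraus1}(i)).

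Next I would bring Proposition~\ref{proppure} to bear on the right-hand side. The generalised L\"uders instruments are $\hI^L(\{x\})=\sfe(\{x\})^{\frac12}(\,\cdot\,)\sfe(\{x\})^{\frac12}$ and $\hJ^L(\{y\})=\sff(\{y\})^{\frac12}(\,\cdot\,)\sff(\{y\})^{\frac12}$, which are pure operations implementing the effects $\sfe(\{x\})$ and $\sff(\{y\})$ respectively. For fixed $x,y$, the condition that there exist pure operations $\Psi_x,\Phi_y$ with $\Psi_x\circ\hI^L(\{x\})=\Phi_y\circ\hJ^L(\{y\})$ says precisely that $\hI^L(\{x\})$ and $\hJ^L(\{y\})$ have a common ``pure post-processing'', i.e.\ there is a pure operation $\Lambda$ factoring through both. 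I would observe that $\Lambda:=\Psi_x\circ\hI^L(\{x\})=\Phi_y\circ\hJ^L(\{y\})$ is itself a pure operation, and its effect $\Lambda^*(I)=:A$ satisfies $A\leq\sfe(\{x\})$ and $A\leq\sff(\{y\})$ by Proposition~\ref{proppure} (applied with $(\Lambda,\hI^L(\{x\}))$ and with $(\Lambda,\hJ^L(\{y\}))$); conversely, given a nonzero $A\in{\rm l.b.}\{\sfe(\{x\}),\sff(\{y\})\}$, writing $A=N^*N$ with $N$ a contraction, the pure operation $\Lambda=N(\,\cdot\,)N^*$ satisfies $A\le\sfe(\{x\})$, so by Proposition~\ref{proppure}(i)$\Rightarrow$(ii) there is a pure $\Psi_x$ with $\Lambda=\Psi_x\circ\hI^L(\{x\})$, and similarly a pure $\Phi_y$ with $\Lambda=\Phi_y\circ\hJ^L(\{y\})$, hence $\Psi_x\circ\hI^L(\{x\})=\Phi_y\circ\hJ^L(\{y\})$. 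Thus: for fixed $x,y$, such $\Psi_x,\Phi_y$ exist if and only if ${\rm l.b.}\{\sfe(\{x\}),\sff(\{y\})\}\neq\{0\}$.

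Combining the two paragraphs finishes the argument: $\sfe$ and $\sff$ fail to be $(\hA_0,\hB_0)$-complementary iff ${\rm l.b.}\{\sfe(\{x\}),\sff(\{y\})\}\neq\{0\}$ for some $x\in\Omega,y\in\Omega'$, iff there exist $x,y$ and pure operations $\Psi_x,\Phi_y$ with $\Psi_x\circ\hI^L(\{x\})=\Phi_y\circ\hJ^L(\{y\})$; negating gives the proposition. Essentially this is just Theorem~\ref{seuraus1}(iv) read off for the specific instruments $\hI^L,\hJ^L$ and the singleton sets, with Proposition~\ref{proppure} supplying the translation between effect order and factorisation of pure operations.

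The one point needing a little care — and the only real obstacle — is the quantifier structure of Theorem~\ref{seuraus1}(iv) versus the statement here: (iv) ranges over \emph{all} pure operations $\Phi\in\sfe(X)$, $\Psi\in\sff(Y)$, whereas the proposition fixes the particular L\"uders operations. One must check this is harmless, i.e.\ that a common pure post-processing exists for \emph{some} pair of implementing pure operations iff it exists for the L\"uders pair; this follows because any pure $\Phi\in\sfe(\{x\})$ is of the form $\Phi=V\circ\hI^L(\{x\})$ for a (pure, in fact implemented by a partial isometry) operation $V$ — indeed $\Phi=K(\,\cdot\,)K^*$ with $K^*K=\sfe(\{x\})=\sfe(\{x\})^{\frac12}\sfe(\{x\})^{\frac12}$, so Lemma~\ref{douglas1} gives a contraction $C$ with $K=C\sfe(\{x\})^{\frac12}$ and hence $\Phi=C(\,\cdot\,)C^*\circ\hI^L(\{x\})$ — so any factorisation through $\Phi$ yields one through $\hI^L(\{x\})$, and the converse is immediate. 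Spelling this reduction out (or simply invoking that $\hI^L(\{x\})$ is a pure operation implementing $\sfe(\{x\})$ and citing Theorem~\ref{seuraus1}(iv) together with the remark that it suffices to test one implementing pure operation) is all that the proof requires beyond the quoted results.
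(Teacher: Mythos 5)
Your argument is correct and is exactly the route the paper intends: the paper gives no written proof of this proposition, merely noting that it follows from Proposition~\ref{proppure}, and your elaboration --- reducing $(\hA_0,\hB_0)$-complementarity to ${\rm l.b.}\{\sfe(\{x\}),\sff(\{y\})\}=\{0\}$ for all $x,y$ and then using Proposition~\ref{proppure} (via Lemma~\ref{douglas1}) to convert a nonzero common lower bound into a common pure post-processing of the two L\"uders operations and back --- is precisely that reduction, including the correct observation that it suffices to test the L\"uders representatives. The only caveat, inherited from the paper's own formulation of the proposition and of Theorem~\ref{seuraus1}(iv) rather than introduced by you, is that one must tacitly exclude the degenerate case in which the common operation $\Psi_x\circ\hI^L(\{x\})$ is the zero operation (whose associated effect is $0$ and hence yields no nonzero lower bound), since otherwise the stated condition could be satisfied vacuously.
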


\subsection{Complementarity in terms of dilations}\label{JPjakso}

In this section we characterise complementarity using Naimark dilations of the observables; this method will then be further refined in applications. We consider two observables 
$\Eo:\,\hA\to\lh$ and
$\Fo:\,\hB\to\lh$ with the value spaces $(\Om,\hA)$ and $(\Om',\hB)$, and 
let $(\hi_\oplus,\Qo,J)$, resp.\ $(\hi'_\oplus,\Qo',K)$,  be a minimal diagonal Naimark dilation of $\Eo$, resp.\ $\Fo$ (see, for instance, \cite[Sec.\ 8.6]{QM}).
For instance, $\hi_\oplus$ is a direct integral Hilbert space, $\Qo:\,\hA\to\li(\hi_\oplus)$ its canonical spectral measure, and $J:\,\hi\to\hi_\oplus$  an isometry such that $\Eo(X)=J^*\Qo(X)J$ for all $X\in\hA$. Lemma \ref{dilat} now yields the following characterisation.

\begin{proposition}\label{theorem1}
$\Eo$ and $\Fo$ are $(\hA_0,\hB_0)$-complementary if and only if for each $X\in \hA_0$, $Y\in \hB_0$,
\begin{equation}\label{condition}
J^*\eta=K^*\eta',
\qquad \eta\in\overline{{\rm ran}[\Qo(X)J]}, 
\quad \eta'\in\overline{{\rm ran}[\Qo'(Y)K]},
\end{equation}
implies $\eta=0$ (or, equivalently, $J^*\Qo(X)\eta=0$). 
\end{proposition}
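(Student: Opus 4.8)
The plan is to translate the abstract complementarity condition, already characterised in Theorem~\ref{seuraus1}(ii) as ${\rm ran}\sqrt{\Eo(X)}\cap{\rm ran}\sqrt{\Fo(Y)}=\{0\}$ for all $X\in\hA_0$, $Y\in\hB_0$, into the language of the Naimark dilations by applying Lemma~\ref{dilat} twice. First I would fix $X\in\hA_0$, $Y\in\hB_0$ and recall that since $\Qo$ is a spectral measure, $\Qo(X)$ is a projection, so $P:=\Qo(X)$ together with the isometry $J$ gives a Naimark dilation $\Eo(X)=J^*\Qo(X)J$ of the effect $\Eo(X)$ in exactly the form required by Lemma~\ref{dilat} (note $\Qo(X)=\Qo(X)^2$, so $\Eo(X)=J^*\Qo(X)^*\Qo(X)J=(\Qo(X)J)^*(\Qo(X)J)$, consistent with the factorisation picture). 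Likewise $\Fo(Y)=K^*\Qo'(Y)K$ is a Naimark dilation of $\Fo(Y)$.

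Next I would use Proposition~\ref{Paul_Stan} (or directly the ``weak atom'' content of Lemma~\ref{dilat}): a unit vector direction $\psi$ lies in ${\rm ran}\sqrt{\Eo(X)}$ if and only if $\lambda\kb{\psi}{\psi}\le \Eo(X)$ for some $\lambda>0$, which by Lemma~\ref{dilat} holds iff $\psi=J^*\eta$ for some $\eta\in\overline{{\rm ran}[\Qo(X)J]}$. Hence ${\rm ran}\sqrt{\Eo(X)}=\{J^*\eta\mid \eta\in\overline{{\rm ran}[\Qo(X)J]}\}$ (allowing the zero vector, and noting the norm bound on $\eta$ is irrelevant once we pass to the linear span / drop it, since we only care about the subspace). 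The same identity gives ${\rm ran}\sqrt{\Fo(Y)}=\{K^*\eta'\mid \eta'\in\overline{{\rm ran}[\Qo'(Y)K]}\}$. Therefore the intersection ${\rm ran}\sqrt{\Eo(X)}\cap{\rm ran}\sqrt{\Fo(Y)}$ is nonzero precisely when there exist $\eta\in\overline{{\rm ran}[\Qo(X)J]}$ and $\eta'\in\overline{{\rm ran}[\Qo'(Y)K]}$ with $J^*\eta=K^*\eta'\ne 0$; equivalently, the condition ${\rm ran}\sqrt{\Eo(X)}\cap{\rm ran}\sqrt{\Fo(Y)}=\{0\}$ is exactly the statement that \eqref{condition} forces $J^*\eta=0$. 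Quantifying over $X\in\hA_0$, $Y\in\hB_0$ and invoking Theorem~\ref{seuraus1} finishes the proof.

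It remains to check the parenthetical equivalences in the statement. For ``$J^*\eta=0$ iff $\eta=0$'': $J$ is an isometry, so $J^*$ is injective on ${\rm ran}\,J$; but $\eta\in\overline{{\rm ran}[\Qo(X)J]}\subset \overline{{\rm ran}\,J}={\rm ran}\,J$ (the range of an isometry is closed), hence $J^*\eta=0\Rightarrow\eta=0$, and the converse is trivial. For ``$J^*\eta=0$ iff $J^*\Qo(X)\eta=0$'': since $\Qo(X)$ is a projection and $\eta\in\overline{{\rm ran}[\Qo(X)J]}\subset{\rm ran}\,\Qo(X)$, we have $\Qo(X)\eta=\eta$, so the two conditions literally coincide. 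These are all routine.

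I expect the only genuinely delicate point to be the bookkeeping of norm constraints on $\eta$ and $\psi$: Lemma~\ref{dilat} is phrased for $\|\psi\|\le 1$ and produces $\eta$ with $\|\eta\|\le 1$, whereas here we need the statement about the full (unnormalised) subspace ${\rm ran}\sqrt{\Eo(X)}$. This is resolved by homogeneity — scaling $\psi$ by a small positive constant brings it into the unit ball without changing which ray it spans, and the factorisation $\psi=J^*\eta$ scales correspondingly — so the subspace-level identification ${\rm ran}\sqrt{\Eo(X)}=J^*\big(\overline{{\rm ran}[\Qo(X)J]}\big)$ is unaffected and condition~\eqref{condition} may be stated without any norm bound on $\eta,\eta'$; I would remark on this briefly rather than belabour it.
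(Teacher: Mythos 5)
Your overall route is exactly the paper's: the paper offers no written proof beyond the remark that Lemma~\ref{dilat} yields the claim, and your combination of Theorem~\ref{seuraus1}(ii), Proposition~\ref{Paul_Stan} and Lemma~\ref{dilat}, together with the scaling observation that removes the norm constraints, is the intended argument. Its main thrust is correct: ${\rm ran}\sqrt{\Eo(X)}=J^*\big(\overline{{\rm ran}[\Qo(X)J]}\big)$ and likewise for $\Fo(Y)$, so triviality of the range intersection is precisely the statement that \eqref{condition} forces $J^*\eta=0$.

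There is, however, one genuine error in the part you dismiss as routine. To pass from $J^*\eta=0$ to $\eta=0$ you claim $\overline{{\rm ran}[\Qo(X)J]}\subset\overline{{\rm ran}\,J}={\rm ran}\,J$ and then invoke injectivity of $J^*$ on ${\rm ran}\,J$. That inclusion is false in general: $\Qo(X)$ does not preserve ${\rm ran}\,J$ unless $J$ is unitary, i.e.\ unless $\Eo$ is projection valued (cf.\ Remark~\ref{remarkJP}). For instance, in the number--phase setting of Section~\ref{JP_example_torus}, ${\rm ran}\,J$ is the Hardy subspace of $L^2[0,2\pi)$ and multiplication by $\chi_X$ takes it out of that subspace. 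The conclusion you need is nevertheless true, but requires a different (equally short) argument: for $\eta\in\overline{{\rm ran}[\Qo(X)J]}$ one has $\Qo(X)\eta=\eta$, since the range of the projection $\Qo(X)$ is closed and contains $\overline{{\rm ran}[\Qo(X)J]}$; hence, if $J^*\eta=0$, then for every $\fii\in\hi$ we get $\ip{\Qo(X)J\fii}{\eta}=\ip{J\fii}{\Qo(X)\eta}=\ip{\fii}{J^*\eta}=0$, so $\eta$ is orthogonal to ${\rm ran}[\Qo(X)J]$ and therefore to its closure, which contains $\eta$ itself; thus $\eta=0$. With this repair the proof is complete; your verification of the second parenthetical equivalence (that $J^*\Qo(X)\eta=J^*\eta$) already uses the correct fact $\Qo(X)\eta=\eta$ and is fine.
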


\begin{remark}\label{remarkJP}
We note the following relevant facts:
\begin{enumerate}
\item If $\eta\in\overline{{\rm ran}[\Qo(X)J]}$ and $\eta'\in\overline{{\rm ran}[\Qo'(Y)K]}$ then
 $\Qo(\Omega\setminus X)\eta=0$ and $\Qo'(\Omega'\setminus Y)\eta'=0$.
\item \label{4}
If, say, $\Eo$ is projection valued, then $J$ is unitary, and \eqref{condition} reads 
$\eta=F^*\eta'$ with $\eta\in{{\rm ran}\,\Qo(X)}$ and
$F^*F=I_{\hi_\oplus}$, i.e.\ $F=KJ^*$ is an isometry, $\hi\cong\hi_\oplus$, $\dim\hi_\oplus\le\dim\hi'_\oplus$.
If also $\Fo$ is projective, then $\hi\cong\hi_\oplus\cong\hi'_\oplus$; moreover
$\eta\in{{\rm ran}\,\Qo(X)}$ and $\eta'\in{{\rm ran}\,\Qo'(Y)}$ so that 
$\Eo(X)\wedge\Fo(Y)=0$ is equivalent to ${{\rm ran}\,\Qo(X)}\cap F^*\big({{\rm ran}\,\Qo'(Y)}\big)=\{0\}$.
\end{enumerate}
\end{remark}

\subsection{Other formulations of complementarity}\label{Ludwig}
There is a stronger form of complementarity advanced, for instance, in \cite{OQP,P3,P38}. 
Accordingly,  two observables $\sfe$ and $\sff$ could be called {\em strongly complementary} if 
$$
\sfe(X)\land\sff(Y)=\sfe(X)\land\sff(Y)^\perp=\sfe(X)^\perp\land\sff(Y)=0
$$
for any $X\in\hA_0,\; Y\in\hB_0$. 
Some of the natural pairs of observables are known to be  complementary but not strongly complementary (examples below) which is why we consider here the weaker formulation as the generic notion.

The notion of extreme incompatibility may also be used  to express complementarity.  Here we quote three  versions of extreme incompatibility.

First,  let $\hE$ be the collection of trivial effects $\lambda I$, $0\leq \lambda\leq 1$, and let $\sft=p I$ be a trivial observable,  defined by  a probability measure $p$.
According to Ludwig  \cite[D 3.3, p.\ 154]{Ludwig1983}
two observables $\sfe$ and $\sff$ are L-{\em  complementary}, L for Ludwig,  if neither of them is a trivial observable
 and for each observable $\sfe'$ it follows that 
 $$\sfe(\hA)\cap\sfe'(\hA')\subset\hE\quad {\rm or}\quad \sff(\hB)\cap\sfe'(\hA')\subset\hE.
$$
Clearly, this is an extreme case of incompatibility. In fact, if $\sfe_1$ and $\sfe_2$ are  L-complementary  then they cannot have any mutually commuting effects in their ranges. Indeed, if $E\in\sfe_1(\hA_1)$ and $F\in\sfe_2(\hA_2)$ are mutually commuting, then any joint observable $\sfe$ of the dichotomic observables $\{0,E,E^\perp,I\}$ and $\{0,F,F^\perp,I\}$ would contradict their L-complementarity.

As is  well known, any two observables  $\sfe_1$ and $\sfe_2$ can be made compatible by adding trivial noise in the form
$\widetilde\sfe_1=\lambda\sfe_1+(1-\lambda)\sft_1$ and $\widetilde\sfe_2=\mu\sfe_2+(1-\mu)\sft_2$, where  $0\leq\lambda,\,\mu\leq 1,$ and $\sft_1,$ $\sft_2$ are trivial observables, see, for instance, \cite{P75}. Let $J(\sfe_1,\sfe_2)$  denote the set of pairs $(\lambda,\mu)\in[0,1]\times[0,1]$ for which there exist  $(\sft_1,\sft_2)$ such that 
$\widetilde\sfe_1$ and $\widetilde\sfe_2$ are compatible. Then $\Delta\subset J(\sfe_1,\sfe_2)$, where  $\Delta=\{(\lambda,\mu)\,|\, \lambda+\mu\leq 1\}$.  Clearly, if $(1,1)\in J(\sfe_1,\sfe_2)$, then $\sfe_1$ and $\sfe_2$ are compatible. In the other extreme,
  $J(\sfe_1,\sfe_2)=\Delta$ and the observables may be called maximally incompatible.
 If $j(\sfe_1,\sfe_2)$ denotes the supremum of the set of the numbers $0\leq\lambda\leq 1$ such that $(\lambda,\lambda)\in J(\sfe_1,\sfe_2)$, then $\sfe_1$ and $\sfe_2$ are maximally incompatible if and only if $\lambda=\frac 12$ \cite{TJ_etal_2014}.

Finally, there is a slightly different form of maximal incompatibility  especially useful for sharp observables. The degree of commutativity of two projections $P,$ $R\in\ph$ can be desribed in terms of their commutativity projection
$$
{\rm com}(P,R)=(P\land R)\lor (P\land R^\perp)\lor (P^\perp\land R)\lor (P^\perp\land R^\perp),
$$
the range of which consists exactly of the vectors $\fii\in\hi$ for which $PR\fii=RP\fii$. Clearly, $0\leq {\rm com}(P,R)\leq I$, the extreme cases indicating total noncommutativity and commutativity, respectively. One can further refine the totally noncommutative case in terms of the spectrum of the effect $PRP$ (or, equivalently, $RPR$) -- the case where the spectrum is the whole $[0,1]$ represents maximal incompatibility in the sense of robustness against arbitrarily biased binary noise -- for instance, position and momentum projections corresponding to half-lines fall into this category \cite{noiserob}.

For any sharp observables $\sfa$ and $\sfb$ we have
${\rm com}(\sfa,\sfb)=\bigwedge_{X\in\hA,\,Y\in\hB}\,{\rm com}(\sfa(X),\sfb(Y))$,
and it is known \cite{Kari1985} that a unit vector $\fii$ is in the range of this projection
exactly when there is a probability measure $\mu:\hA\otimes\hB\to[0,1]$ such that
$\mu(X\times Y)=\ip{\fii}{\sfa(X)\sfb(Y)\fii} = \ip{\fii}{\sfa(X)\land\sfb(Y)\fii}$.
Clearly, observables $\sfa,\;\sfb$ are compatible if this projection is the identity $I$. On the other hand, if  ${\rm com}(\sfa,\sfb)=0$  the observables can be called  totally incompatible.

\section{Examples of complementarity}\label{applications}

\subsection{The canonical case: position and momentum in $L^2(\R)$}\label{QPsection}
Position and momentum are the  prototype pair of complementary observables. They were also central to the work of Paul Busch. Therefore, we start with a brief discussion of the well-known results in this setting, and then proceed to derive a few new results on the complementarity of the relevant unsharp localisation effects, focusing on the (previously less studied) case where one of them is periodic. 

Let $\sfq$ and $\sfp$ be the canonical position and momentum observables in  $\hi=L^2(\R)$, and  $F$ be the Fourier-Plancherel operator. We let $Q$ and $P$ denote the corresponding selfadjoint position and momentum operators, so that $P=F^*QF$.
\subsubsection{Complementarity of $\sfq$ and $\sfp$}
As is  well known\footnote{Equations \ref{Lenard0} express the basic  fact that the support of the Fourier transform of a compactly supported function is the whole  $\R$. The result \eqref{Lenard} is derived in \cite{Lenard}.},
for bounded $X,\,Y\in\br$,
\begin{eqnarray}
&&\sfq(X)\land\sfp(Y)=\sfq(X)\land\sfp(\R\setminus Y)=\sfq(\R\setminus X)\land\sfp(Y)=0, \label{Lenard0}\\
&&\sfq(\R\setminus X)\land\sfp(\R\setminus Y)\ne 0.\label{Lenard}
\end{eqnarray}
Since bounded sets in $\br$ (even bounded intervals) form a generating semiring $\hR$ that covers $\R$ (in the sense of countable union) we conclude that $\sfq$ and $\sfp$ are not only complementary but even strongly complementary\footnote{Note that ${\rm supp}(\sfq)=\R={\rm supp}(\sfp)$.}  observables, if we choose $\hA_0=\hB_0=\hR$.
Moreover, it can be shown \cite{TJ_etal_2014} that they are maximally incompatible in the sense of joint measurability region discussed in Sec. \ref{Ludwig}. Finally, since
$$
{\rm com}(\sfq,\sfp)\leq \bigwedge_{X,Y\in \hR}{\rm com}(\sfq(X),\sfp(Y))
=\bigwedge_{X,Y\in \hR}\sfq(\R\setminus X)\land \sfp(\R\setminus Y)=0,
$$
the canonical pair $(\sfq,\sfp)$  is also totally incompatible in the sense of trivial commutativity domain. 
On the other hand, they are not L-complementary and also not $(\hA_0,\hB_0)$-complementary if we include countable unions of sets of $\hR$ in $\hA_0$ and $\hB_0$. Indeed, for any periodic sets $X+a,$ $Y+b$, with minimal positive periods $a,b$ satisfying $\frac{2\pi}{ab}\in\N$, one has $\sfq(X)\sfp(Y)=\sfp(Y)\sfq(X)$ (see, for instance, \cite[Theorem 15.2]{QM}).

This means that they have jointly measurable coarse grainings, not only of the binary form associated with the above type of sets, 
but of the form $X\mapsto\sfq^f(X)=\sfq(f^{-1}(X))$ and $Y\mapsto\sfp^g(Y)=\sfp(g^{-1}(Y))$, where $f$ and $g$ are essentially bounded periodic Borel functions with minimal positive periods $a,b$ satisfying $\frac{2\pi}{ab}\in\N$ (see, e.g.\ \cite[Theorem 15.2]{QM}).

\subsubsection{Complementarity of derived effects}

In addition to the sharp observables $\sfq$ and $\sfp$, it is also natural to study effects derived from them as first moments, assuming that $0\leq f\leq 1$ and $0\leq g\leq 1$, that is, $E=f(Q)$ and $F=g(P)$. In the philosophy of Paul Busch, these correspond to \emph{unsharp} properties,\footnote{The concept of unsharp property has gradually refined in the work of Paul Busch, with the first explicit definition being given in \cite[Definition 4]{Paul84}.}
which for suitable functions will approximate the  {\em sharp} properties $\sfq(X)$ and $\sfp(Y)$. There are then different cases of complementarity: on the one hand, if the functions have compact support, their support projections are complementary by the above discussion, and hence the effects themselves remain complementary due to Lemma \ref{necessary}. On the other hand, in the periodic case specified above, the effects are commutative and hence non-complementary, even compatible. As a basic example of the latter case, we consider
\begin{align}\label{haversincom}
E &= f_0(Q) , & F &= g_0(P),
\end{align}
with ``haversin'' and ``havercos'' functions $f_0(x)=\frac 12(1-\cos x)$ and $g_0(p)=\frac 12(1+\cos ((2\pi)^{-1}p))$. These effects represent unsharp periodic localisation: for instance, $E$ assigns small probabilities to states concentrated around points $2\pi n$, $n\in \mathbb Z$, and large ones to states around $2\pi n+\pi$. 

It is now interesting to consider the intermediate case -- we retain the periodic effect $E$ on the $Q$-side but compress $F$ into one periodicity interval; in the haversin example, this yields
\begin{align}\label{haversin}
E &= f_0(Q) , & F &  = \sfp\big(\textstyle{[-\frac 12,\frac 12]}\big)(P)g_0(P)\sfp\big(\textstyle{[-\frac 12,\frac 12]}\big).
\end{align}
Note that this can be understood operationally as conditioning on the $P$-measurement.

Since $P_E=\id$, the support projections are not disjoint in this case, and we need to look into the structure of the ranges more closely. It turns out that this pair is in fact complementary -- we now proceed to prove this result in a more general case, which applies both in the context of multislit interferometry and ``generalised Jauch theorem'' considered later in this paper.

Accordingly, let $E=f(Q)$ and  $F=g(P)$, where $f,g:\mathbb R\to[0,1]$ are measurable functions, and denote by $\mathcal Z_f=f^{-1}(\{0\})$ the zero set of $f$ and $\mathcal S_f =\mathbb R\setminus \mathcal Z_f$ its complement set. By  Remark \ref{huomautus1} the support subspaces and critical domains are now given by
\begin{align*}
\mathcal H_{E}&=L^2(\mathcal S_f), & \mathcal H_F&=F^{-1}L^2(\mathcal S_g),\\
{\rm ran}\, E^{\frac 12} &= \left\{\phi\in \mathcal H_E\Big| \int \frac{|\phi(x)|^2}{f(x)}dx <\infty\right\},& {\rm ran}\, F^{\frac 12} &= \left\{\phi\in \mathcal H_F\Big| \int \frac{|\widehat\phi(p)|^2}{g(p)}dp <\infty\right\}.
\end{align*}
\begin{remark} Before proceeding, a couple of subtleties are worth pointing out.

(a) Since $E^{\frac 12}$ acts as $\big(E^{\frac 12}\psi\big)(x)= \sqrt{f(x)} \psi(x)$, one might think that each continuous function $\phi\in {\rm ran}\, E^{\frac 12}$ must vanish whenever $f$ does, assuming $f$ is also continuous. Of course, this is not the case; 
for instance, if $f(x) = \sqrt{|x|}$ for $|x|\leq 1$ and 1 for $|x|>1$, then any $\phi\in L^2(\mathbb R)$ which is constant on $[-1,1]$, belongs to ${\rm ran} \,E^{\frac 12}$.

(b) Another issue is related to the support subspace: By Lemma \ref{necessary1}, $\mathcal H_E= L^2(\mathcal S_f)\subset L^2({\rm supp} (f))$ where ${\rm supp} \,f=\overline{ \mathcal S_f}$, but the inclusion may be strict. Hence there could be cases where $\mathcal H_E\cap \mathcal H_F=\{0\}$ but $\sfq({\rm supp}(f))\land\sfp({\rm supp}(g))= 0$ does not hold.
\end{remark}

\begin{proposition}\label{Jukan_esimerkki1} Suppose that $g\neq 0$ is measurable and compactly supported, and fix any $R>0$ such that ${\rm supp}(g)\subset [-R,R]$. Then assume that $f$ is continuous with zero set $\mathcal Z_f=\{n\pi/R\mid n\in \mathbb Z\}$ and $x\mapsto f(x)^{-1}$ not integrable over any neighbourhood of any $x_0\in \mathcal Z_f$. Then $E$ and $F$ are complementary but $\mathcal H_E\cap \mathcal H_F \neq \{0\}$.
\end{proposition}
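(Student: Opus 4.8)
The plan is to handle the two claims separately, starting with the easy one. For $\mathcal H_E\cap\mathcal H_F\neq\{0\}$ I would use the formulas for $\mathcal H_E$ and $\mathcal H_F$ recorded just above the statement (they come from Remark \ref{huomautus1}): the zero set $\mathcal Z_f=\{n\pi/R\mid n\in\Z\}$ is discrete, hence Lebesgue null, so $\mathcal H_E=L^2(\mathcal S_f)=L^2(\R)=\hi$ and $P_E=\id$; meanwhile $g\neq 0$ forces $\mathcal S_g$ to have positive measure, so $\mathcal H_F=F^{-1}L^2(\mathcal S_g)\neq\{0\}$, and therefore $\mathcal H_E\cap\mathcal H_F=\mathcal H_F\neq\{0\}$.

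For the complementarity, by Corollary \ref{GenJauch} (equivalently by condition (i)/(ii) of Theorem \ref{seuraus1}) it suffices to show that ${\rm ran}\,E^{\frac 12}\cap{\rm ran}\,F^{\frac 12}=\{0\}$. So I would fix $\phi$ in this intersection and argue $\phi=0$ in two steps. First, since $\phi\in{\rm ran}\,F^{\frac 12}\subset\mathcal H_F$, the Fourier transform $\widehat\phi$ vanishes almost everywhere outside $\mathcal S_g\subset{\rm supp}(g)\subset[-R,R]$, so $\widehat\phi\in L^2([-R,R])\subset L^1([-R,R])$, and hence $\phi(x)=(2\pi)^{-\frac 12}\int_{-R}^{R}\widehat\phi(p)e^{ipx}\,dp$ is continuous in $x$ (indeed the restriction to $\R$ of an entire function of exponential type $\le R$, by Paley--Wiener). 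Second, since $\phi\in{\rm ran}\,E^{\frac 12}$, the integral $\int_\R|\phi(x)|^2f(x)^{-1}\,dx$ is finite; if $\phi(x_0)\neq0$ for some $x_0\in\mathcal Z_f$, then by continuity $|\phi|^2$ is bounded below by a positive constant on a small interval about $x_0$ containing no other zero of $f$, and since $f^{-1}$ fails to be integrable over any neighbourhood of $x_0$ this integral would diverge. Hence $\phi$ vanishes at every point of $\frac{\pi}{R}\Z$.

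It then remains to invoke that a square-integrable function band-limited to $[-R,R]$ that vanishes at all the points $n\pi/R$, $n\in\Z$, is identically zero. I would argue this directly: the functions $p\mapsto(2R)^{-\frac 12}e^{in\pi p/R}$, $n\in\Z$, form an orthonormal basis of $L^2([-R,R])$ (the interval has length $2R$), and the $n$-th coefficient of $\widehat\phi$ with respect to this basis is, up to a nonzero multiplicative constant, equal to the sample $\phi(-n\pi/R)=0$. Thus $\widehat\phi=0$, so $\phi=0$, giving ${\rm ran}\,E^{\frac 12}\cap{\rm ran}\,F^{\frac 12}=\{0\}$ and hence, by Corollary \ref{GenJauch}, the complementarity of $E$ and $F$.

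The only delicate point is this last step, which is the Whittaker--Shannon sampling theorem at the \emph{critical} (Nyquist) rate: the familiar statements valid for strict oversampling do not apply directly, and one must be slightly careful about whether the band is the closed or the open interval. Phrasing it through the orthonormality of $\{e^{in\pi p/R}\}_{n\in\Z}$ on $[-R,R]$ avoids this issue entirely. (Alternatively, one could note that $\phi$ extends to an entire function of exponential type $\le R$ which is bounded on $\R$ and vanishes on $\frac{\pi}{R}\Z$, so that $\phi(z)/\sin(Rz)$ is entire of exponential type $0$, bounded on $\R$ and vanishing at infinity along $\R$, hence identically zero by a Phragm\'en--Lindel\"of/Liouville argument.)
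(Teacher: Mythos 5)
Your proposal is correct and follows essentially the same route as the paper's own proof: band-limitation from $\phi\in{\rm ran}\,F^{\frac 12}$ gives continuity, the non-integrability of $f^{-1}$ near the zeros forces $\phi$ to vanish on $\frac{\pi}{R}\Z$, and the critical-rate sampling argument via the orthonormal exponentials on $[-R,R]$ then gives $\phi=0$. Your explicit caution about the Nyquist-rate subtlety and the correct $(2R)^{-\frac 12}$ normalisation is a small improvement in precision over the paper's phrasing, but the argument is the same.
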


\begin{proof} Since $g\neq 0$, $\mathcal S_g\subset [-R,R]$ has nonzero measure, and hence we have $\{0\}\neq \mathcal H_F\subset F^{-1}L^2[-R,R]$. We also have $\mathcal H_E=L^2(S_f)=L^2(\mathbb R\setminus \mathcal Z_f) = L^2(\mathbb R)=L^2({\rm supp}(f))$. In particular, the intersection $\mathcal H_E\cap\mathcal H_F=\mathcal H_F$ is nontrivial.

Let now $\phi\in {\rm ran}\, E^{\frac 12}\cap {\rm ran }\, F^{\frac 12}$, and note first that $\phi\in {\rm ran }\, F^{\frac 12}\subset \mathcal H_F\subset F^{-1}L^2[-R,R]$ implies $\widehat \phi(p)=0$ for $|p|>R$. In particular, $\phi$ is the inverse Fourier transform of an integrable function (as $L^2[-R,R]\subset L^1[-R,R]$), and hence continuous. Therefore, if $\phi(x_0)\neq 0$ for some $x_0\in \mathcal Z_1$ we could find $\epsilon,\delta>0$ for which $\int_{x_0-\epsilon}^{x_0+\epsilon} f(x)^{-1}|\phi(x)|^2dx\geq \delta \int_{x_0-\epsilon}^{x_0+\epsilon} f(x)^{-1} dx$, which would contradict $\phi\in {\rm ran}\, E^{\frac 12}$ as the second integral is infinite by assumption. Hence we must have $\phi(x)=0$ for all $x\in \mathcal Z_f$, that is, $\phi(n\pi/R)=0$ for all $n\in \mathbb Z$. Next note that the restriction $\widehat \phi\in L^2[-R,R]$ implies that $\widehat \phi = \sum_{n\in \mathbb Z} \langle \psi_n|\widehat \phi\rangle \psi_n$ where $\psi_n(p) = \frac{1}{\sqrt{2\pi}}\chi_{[-R,R]}(p)e^{-inp\pi /R}$ forms an orthonormal basis of the subspace $L^2[-R,R]$. But here $\langle \psi_n|\widehat \phi\rangle= \frac{1}{\sqrt {2\pi}} \int_{-R}^R e^{in\pi p/R} \widehat{\phi}(p)dp = (F^{-1}\widehat \phi)(n\pi/R)=\phi(n\pi/R)$. (This is just the \emph{sampling theorem} from signal analysis, see e.g.\ \cite[p. 230]{Folland}, but we need the above calculation to get the constants right with our definition of $F$.) Since $\phi(n \pi/R)=0$ for all $n\in \mathbb Z$ we have $\widehat\phi =0$, and hence also $\phi=0$. We have shown that ${\rm ran}\, E^{\frac 12}\cap {\rm ran }\, F^{\frac 12}=\{0\}$, that is, $E$ and $F$ are complementary.
\end{proof}

As an example, the complementarity of the pair \eqref{haversin} follows directly from Prop. \ref{Jukan_esimerkki1} (where now $R=1/2$), since the non-integrability condition is satisfied as $f_0(x)\sim x^2$ near zero.

\begin{remark} Note that this example (and also the general context of Proposition \ref{Jukan_esimerkki1}) corresponds to a boundary case where the effect $E=f(Q)$ does not have a kernel but the spectrum still reaches zero, thereby allowing a possibility for complementarity. The case where $\inf f>0$ is uninteresting as ${\rm ran}\,E^{\frac 12}=L^2(\mathbb R)$ and the effect is not complementary with any other effect.
\end{remark}

\subsubsection{Complementarity and (informational) compleness}
We close this section with a brief discussion on another aspect of $QP$-complementarity which was historically significant. The idea of complementarity of observables typically includes also the idea of their equal importance for the full description of the system. Perhaps, it was in this sense that Pauli \cite{Pauli1933} posed the question if the position and momentum distributions suffice to determine the state of the system. It was soon demonstrated by Valentine Bargmann (as reported in \cite{Reichenbach}) that this is not the case.\footnote{See, for instance,  \cite{CH} or \cite{ACT2005} for other explicit examples. }
The informational incompleteness of the complementary pair $(\sfq,\sfp)$ suggests that some of the complementary information is lacking.
This leads one to ask if there is a third observable $\sfh$, say energy,  which is  complementary to $\sfq$ and $\sfp$, and 
which would complete $\sfq$ and $\sfp$ to an informationally complete triple $(\sfq,\sfp,\sfh)$.
Example \ref{JPesimerkki1} 
shows that if the spectrum of the energy is purely discrete, then the position-energy and momentum-energy pairs are complementary, too.
If the energy operator $H$ is of the form $H=\frac 1{2m} P^2+V(Q)$, with $V(Q)$ bounded and positive
it is known that the probability distributions $\sfq_\rho,\,\sfp_\rho,\,\sfh_\rho$ do not  suffice to determine the state $\rho$ \cite{CH,Pavi}.
Not knowing the general answer to the posed question, we recall that
if $\sfq_\theta=U_\theta\sfq U_\theta$, with $U_\theta=e^{i\theta H}$, 
is  a quadrature observable, then, not only the pair $\sfq$ and $\sfp=\sfq_{\frac{\pi}2}$, but, in fact, any pair $(\sfq,\sfq_\theta)$, $\theta\notin\{0,\pi\}$, is complementary \cite{JPP2010}. Moreover,  any family of the pairwise complementary observables $\{\sfq_\theta\,|\, \theta\in S\}$, with a dense set  $S\subset[0,2\pi)$,
is informationally complete \cite{KLP2008}.

\subsection{Complementarity of continuous-discrete pairs}\label{JPjakso}

We now focus on pairs $\sfe$ and $\sff$ where $\sfe$ is absolutely continuous with respect to the Lebesgue measure (as in Sec.\ \ref{discr}) and $\sff$ discrete (as in Sec.\ \ref{cont}). Let $\Omega$ and $\Omega'$ denote the respective outcome sets. Throughout this subsection, $\hA_0$ is the family of Borel subsets of $\Omega$ whose complement has nonzero Lebesgue measure, and $\hB_0$ the family of finite proper subsets of $\Omega'$ (i.e.\ the first choice in Sec.\  \ref{discr}). 

\subsubsection{The case of sharp $\sfe$}\label{JPesimerkki1} Here we assume that $\Eo$ is a rank-1 sharp observable, so that $\Eo(X)=J^*\Qo(X)J$ for all $X\in \mathcal A$, where $\Qo$ is the canonical spectral measure on $\hi_\oplus=L^2(\Omega)$ (with $\Omega\subset \mathbb R$ having nonzero Lebesgue measure), and $J$ is unitary. For $\Fo$ assume that $m_y={\rm rank}\, \Fo(y)<\infty$ for each $y\in \Omega'$, and write $\Fo(y)=\sum_{k=1}^{m_y}\kb{f_{yk}}{f_{yk}}$ where $\{f_{yk}\}_{k=1}^{m_y}$ is linearly independent. 
\begin{proposition}[Polynomial method] Suppose that
there is a (measurable) weight function $w:\,\Omega\to(0,\infty)$ such that $Jf_{yk}$ is a polynomial multiplied by $w$ for each $y\in \Omega'$ and $k=1,\ldots,m_y$. Then $\sfe$ and $\sff$ are 
complementary.
\end{proposition}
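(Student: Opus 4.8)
The plan is to use the dilation characterisation of Proposition \ref{theorem1}, together with the concrete form of the dilation of $\sfe$ and the polynomial structure of the $f_{yk}$. Since $\sfe$ is rank-$1$ sharp with $\Eo(X)=J^*\Qo(X)J$, $J$ unitary, complementarity reduces (via Remark \ref{remarkJP}) to showing: for every $X\in\hA_0$ and every finite proper $Y=\{y_1,\ldots,y_r\}\subset\Omega'$, if $\eta\in{\rm ran}\,\Qo(X)$ can be written as $\eta=F^*\eta'$ with $\eta'\in\overline{{\rm ran}[\Qo'(Y)K]}$, where $F=KJ^*$ is the isometry intertwining the dilations, then $\eta=0$. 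Unpacking this on the $\sff$-side: $\overline{{\rm ran}[\Qo'(Y)K]}$ is spanned (within the minimal dilation) by the vectors coming from $\Fo(y)$, $y\in Y$, so $J^*\eta'$ lies in $\lin\{f_{yk}\mid y\in Y,\ 1\le k\le m_y\}$. Translating through $J$, we get $\eta\in L^2(\Omega)$ with $\eta(x)=0$ for a.e.\ $x\notin X$ (since $\eta\in{\rm ran}\,\Qo(X)$) while simultaneously $\eta=J(\sum_{y,k}c_{yk}f_{yk})$ is a polynomial times $w$. So the crux is: a function of the form $p(x)w(x)$ with $p$ a polynomial, vanishing a.e.\ on a set $\Omega\setminus X$ of positive Lebesgue measure, must be identically zero.

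The key step, then, is the observation that since $w>0$ everywhere on $\Omega$, the zero set of $p(x)w(x)$ equals the zero set of the polynomial $p$ (intersected with $\Omega$); and a nonzero polynomial has only finitely many zeros, hence its zero set has Lebesgue measure zero. Since $\Omega\setminus X$ has positive Lebesgue measure by the definition of $\hA_0$, we are forced to conclude $p\equiv 0$, whence $\eta=0$. Running this for every $X\in\hA_0$ and every finite proper $Y\subset\Omega'$ gives condition (i) of Theorem \ref{seuraus1}, i.e.\ $(\hA_0,\hB_0)$-complementarity.

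One technical point needs care when assembling the argument: I must make sure the finite linear combination $\sum_{y\in Y}\sum_{k}c_{yk}(Jf_{yk})$ is again of the form ``polynomial times $w$'' — this is immediate since the $Jf_{yk}$ are all polynomials times the \emph{same} weight $w$, and the sum of polynomials is a polynomial. I also need to confirm, via Proposition \ref{theorem1} and Remark \ref{remarkJP}\eqref{4}, that in the rank-$1$ sharp case the relevant range $\overline{{\rm ran}[\Qo(X)J]}$ pulls back under the unitary $J$ to exactly $L^2(X\cap\Omega)$, and that the condition on the $\sff$-side indeed places $J^*\eta'$ in $\lin\{f_{yk}\}$; this is where the minimality of the diagonal Naimark dilation of $\sff$ is used. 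The main obstacle is thus not analytic depth but bookkeeping: correctly transporting condition \eqref{condition} through the two dilations and the unitary $J$ so that it becomes the clean statement ``a polynomial times a strictly positive weight that vanishes on a positive-measure set is zero''. Once that translation is in place, the polynomial argument closes it in one line.
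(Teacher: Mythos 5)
Your proposal is correct and follows essentially the same route as the paper's own proof: unpack condition \eqref{condition} of Proposition \ref{theorem1} using the unitarity of $J$ and the minimal dilation of $\Fo$, conclude that $\eta$ is a finite linear combination of the $Jf_{yk}$ and hence a polynomial times $w$, and then observe that such a function cannot vanish a.e.\ on the positive-measure set $\Omega\setminus X$ unless it is zero. The only blemish is the notational slip where you write that ``$J^*\eta'$ lies in $\lin\{f_{yk}\}$'' --- this should be $K^*\eta'$ (equivalently $J^*\eta$), but your subsequent translation through $J$ shows you intended exactly the paper's argument.
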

\begin{proof}
Define $\Qo'(y)=\sum_{k=1}^{m_y}\kb{\fii_{yk}}{\fii_{yk}}$, where $\{\fii_{yk}\}$ is an orthonormal basis of a Hilbert space $\hi_\oplus'$, and set 
$K=\sum_{y\in \Omega}\sum_{k=1}^{m_y}\kb{\fii_{yk}}{f_{yk}}$. Then $\Fo(y)=K^*\Qo'(y)K$ for all $y\in \Omega'$ so $K$ and $\Qo'$ form a (minimal) dilation of $\Fo$ on $\hi_\oplus'$.
The complementarity condition \eqref{condition} of Prop.\ \ref{theorem1} for $Y\in \hB_0$ now reads
$J^*\eta=K^*\eta'$ where $\Qo(\Omega\setminus X)\eta=0$ and $\Qo'(\Omega'\setminus Y)\eta' =0$. Hence $J^*\eta=\sum_{y\in Y}\sum_{k=1}^{m_y}\<\fii_{yk}|\eta'\>f_{yk}$, so 
$\eta = (JJ^*)\eta=\sum_{y\in Y}\sum_{k=1}^{m_y}c'_{yk}Jf_{yk}$ with $c'_{yk}=\<\fii_{yk}|\eta'\>$ as $J$ is unitary.
Hence by assumption $\eta$ is a polynomial times $w$, and as such either has a finite number of zeros or is identically zero.
But the former case is impossible, as $\eta(x)=0$ for (almost) all $x\in \Omega\setminus X$ and $\Omega\setminus X$ has nonzero measure. 
Hence $\eta=0$, and an application of Proposition \ref{theorem1} completes the proof.
\end{proof}
In particular, the canonical spectral measure $\Eo$ of any interval $\Omega\subset\R$ has many complementary discrete observables.
In the case of a bounded interval we can choose $\Omega=[-1,1]$ by a simple transformation and the basis consists of Jacobi polynomials or trigonometric polynomials (times a weight factor). The interval $\Omega=[0,\infty)$ gives associated Laguerre polynomials and $\R$ Hermite polynomials. If we let $\Fo$ be the number observable associated with the chosen polynomial basis we see that the canonical spectral measure and the number are complementary observables. Especially, the number (or energy) and the position (or momentum) of the harmonic oscillator form complementary pairs; in this case $\Omega=\R$ and the basis consists of Hermite polynomials multiplied by the Gaussian weight.
 More generally, if $\Fo$ is the energy $\frac12 P^2+V$ where the potential $V$ is such that the energy spectrum is discrete and its eigenvectors are (e.g.\ trigonometric) polynomials (with a weight) then position and energy are complementary (e.g. a particle in a box).

\subsubsection{Circular position and number}\label{JP_example_torus} 
Let $\Om=[0,2\pi)$ and fix a $Z\subset\Z$. Let $\{|n\>\}_{n\in Z}$ be an orthonormal basis of $\hi$, $\Fo=\No$, the number, i.e.\
$\Fo(n)=\kb{n}{n}$, and $\Eo(X)=J^*\Qo(X)J$ where $J=\sum_{n\in Z}\kb{e_n}{n}$, $e_n(\theta)=(2\pi)^{-\frac12}e^{-in\theta}$, and $\Qo$ is the position observable of the circle. Note that this case includes, in particular, the case of periodic position and momentum (with $Z=\mathbb Z$), as well as the canonical phase \cite[p.\ 372]{QM} and number (or the canonical time \cite[p.\ 402]{QM} and energy) of the harmonic oscillator ($Z=\mathbb N$).
 Complementarity in the former case was studied in \cite{LY1987}, while the latter case was treated recently in \cite{LPS2017}. 

Since $\Eo$ is a spectral measure only when $Z=\mathbb Z$, other cases (including number-phase) are not covered by the polynomial method. Nevertheless, the following result holds:
\begin{proposition}\label{proptorus} If $\mathbb Z\setminus Z$ is bounded from below or above,
then $\Eo$ and $\Fo$ are complementary.
\end{proposition}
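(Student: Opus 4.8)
The plan is to invoke the dilation criterion of Proposition~\ref{theorem1}, exploiting that here the minimal dilation of $\Fo$ is essentially trivial. Since $\{\ket n\}_{n\in Z}$ is an orthonormal basis of $\hi$, the number observable $\Fo(n)=\kb nn$ is already projection valued, so one may take $\hi'_\oplus=\hi$, $\Qo'=\Fo$ and $K=I$ (equivalently, the isometry $\ket n\mapsto\ket{\fii_n}$ produced by the construction in the polynomial method above). Then $K^*\eta'=\eta'$ and $\overline{{\rm ran}[\Qo'(Y)K]}={\rm ran}\,\Qo'(Y)=\mathrm{span}\{\ket n\mid n\in Y\}$ for $Y\in\hB_0$. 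Moreover, by Remark~\ref{remarkJP}(1), any $\eta\in\overline{{\rm ran}[\Qo(X)J]}$ vanishes almost everywhere on $[0,2\pi)\setminus X$, which is a set of strictly positive Lebesgue measure because $X\in\hA_0$. Hence, to verify condition \eqref{condition} of Proposition~\ref{theorem1} for all $X\in\hA_0$, $Y\in\hB_0$, it suffices to show: every $\eta\in L^2([0,2\pi))$ that vanishes a.e.\ on $[0,2\pi)\setminus X$ and satisfies $J^*\eta\in\mathrm{span}\{\ket n\mid n\in Y\}$ is the zero vector.

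Next I would unpack the second condition. As $J^*=\sum_{n\in Z}\kb n{e_n}$ with $e_n(\theta)=(2\pi)^{-1/2}e^{-in\theta}$, we have $J^*\eta=\sum_{n\in Z}\ip{e_n}{\eta}\ket n$, so $J^*\eta\in\mathrm{span}\{\ket n\mid n\in Y\}$ holds precisely when $\ip{e_n}{\eta}=0$ for every $n\in Z\setminus Y$. Since $\{e_n\}_{n\in\Z}$ is an orthonormal basis of $L^2([0,2\pi))$, this means that in the expansion $\eta=\sum_{n\in\Z}\ip{e_n}{\eta}\,e_n$ only indices $n\in(\Z\setminus Z)\cup Y$ occur. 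Here the hypothesis enters: $Y$ is finite and, by assumption, $\Z\setminus Z$ is bounded from below or from above, so the index set $(\Z\setminus Z)\cup Y$ is bounded on one side. Thus $\eta$ lies in the closed span of $\{e_n\mid n\le M\}$ (or of $\{e_n\mid n\ge N\}$), and after multiplying by a suitable unimodular character $e^{\pm iM\theta}$ --- which does not alter the zero set of $\eta$ --- we may assume that $\eta$ lies in the Hardy space $H^2$.

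The step carrying the analytic content is then the classical uniqueness/non-vanishing property of Hardy functions: a nonzero element of $H^2$ has integrable logarithm of its modulus, hence is nonzero almost everywhere (a standard consequence of the F.\ and M.\ Riesz theorem). Since $\eta$ was shown to vanish on the positive-measure set $[0,2\pi)\setminus X$, this forces $\eta=0$. An appeal to Proposition~\ref{theorem1} then gives the $(\hA_0,\hB_0)$-complementarity of $\Eo$ and $\Fo$.

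The step I expect to need the most care is the Fourier bookkeeping: one must track how the \emph{finite} set $Y$ furnished by $\hB_0$ and the \emph{one-sidedly bounded} set $\Z\setminus Z$ combine into a set bounded on one side --- this is exactly what the hypothesis buys --- and recognise that the sole genuinely analytic ingredient is the Hardy-space non-vanishing statement. Everything else is routine; in particular, for $Z=\Z$ (periodic position and momentum) that statement degenerates to the triviality that a nonzero trigonometric polynomial has only finitely many zeros, recovering the classical case. The same argument can alternatively be run through Theorem~\ref{seuraus1}(ii), identifying $\hi$ with $\overline{\mathrm{span}}\{e_n\mid n\in Z\}\subset L^2([0,2\pi))$ and using Lemma~\ref{douglas2} to describe ${\rm ran}\sqrt{\Eo(X)}$ as the set of $\phi$ whose restriction to $[0,2\pi)\setminus X$ extends to a function with Fourier support in $\Z\setminus Z$; this is again incompatible with $\phi$ being a nonzero trigonometric polynomial supported on frequencies in $Y$.
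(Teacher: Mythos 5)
Your argument is correct and follows essentially the same route as the paper's own proof: the dilation criterion of Proposition~\ref{theorem1} with $K=I$, the observation that the Fourier support of $\eta$ is confined to $(\Z\setminus Z)\cup Y$ and hence bounded on one side, and the Hardy-space fact that a nonzero $H^2$ function cannot vanish on a set of positive measure. The only cosmetic difference is your explicit multiplication by a unimodular character, which the paper phrases as ``up to a phase''.
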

\begin{proof}
Recall that the condition \eqref{condition} of Prop.\ \ref{theorem1} reads $J^*\eta = \eta' $ where $\Qo\big([0,2\pi)\setminus X\big)\eta =0$ and $\Fo(Z\setminus Y)\eta'=0$, where $X\in \hA_0$ and $Y\in \hB_0$. (Note that now $K=\id$ since $\Fo$ is a spectral measure.) Hence $J^*\eta = \eta'= \sum_{n\in Y} \langle n|\eta'\rangle |n\rangle$, so
$JJ^*\eta=\sum_{n\in Y}\<n|\eta'\>e_n$
where $Y\subset Z$ is finite (and $Y\ne Z$). But $\eta= \sum_{n\in \mathbb Z} \<e_n|\eta\>e_n$ so $JJ^*\eta= \sum_{n\in Z} \<e_n|\eta\>e_n$, which implies that $\<e_m|\eta\>=0$ for all $m\in Z\setminus Y$, that is,
$\eta=\sum_{m\in Y\cup(\Z\setminus Z)}\<e_m|\eta\>e_m$. Suppose first that $\Z\setminus Z$ is bounded from above. Then $Y\cup(\Z\setminus Z)$ is also bounded from above and so $\eta=\sum_{m=-\infty}^{m_Y}\<e_m|\eta\>e_m$ for some $m_Y\in\Z$, that is, $\eta$ is (up to a phase) a Hardy function on the circle. But $\Qo\big([0,2\pi)\setminus X\big)\eta =0$, that is, $\eta(x)=0$ for (almost all) $x\in [0,2\pi)\setminus X$, where $[0,2\pi)\setminus X$ has positive measure. Hence $\eta=0$ (see e.g.\ \cite{helson}). An even easier reasoning applies for the `below' case.
\end{proof}

\subsection{Multislit interferometry}\label{mslit}
Being a classic application of complementarity, spatial interferometry was also studied by Paul Busch until recently \cite{P76,P85}. The following idealised setting (which however quite well approximates typical experimental situations, see the cited papers) neatly illustrates several aspects of complementarity studied above. Consider an infinite periodic aperture mask given by the periodic set $A=\cup_{n\in \mathbb Z} (X+n)$ where $X\subset [-1/2,1/2]$ describes a single slit. Then in the usual Fraunhofer approximation, the position measurement at a detector screen placed at a large distance will correspond to a momentum measurement in the coordinates of the aperture, with a typical interference pattern typically exhibiting periodic structure with the ``inverse'' period $2\pi$.

\subsubsection{Complementarity of ``which way'' and interference measurements}

Consider the following decomposition of $Q$ and $P$ into a sum of periodic part 
and the remainder:
\begin{align*}
Q&=Q_{\rm mod} + Q_{\rm d}, & P&=P_{\rm mod} + P_{\rm d},
\end{align*}
So here $Q_{\rm mod}$ is ``$Q$ modulo $1$'', coinciding with $Q-n\id$ in $L^2([-1/2,1,2]+n)$, and $Q_{\rm d}$ is the discretised position with eigenvalues $n\in \mathbb Z$ labelling the slit index, with eigenprojections $\sfq(X+n)$. The decomposition of $P$ is similar in the momentum space, but with period $2\pi$.

Note that here the canonical complementary pair $(Q,P)$ is decomposed by separating out the commuting part: indeed, $[Q_{\rm mod}, P_{\rm mod}]=0$ as these operators are periodic functions of $Q$ and $P$ of the form discussed at the beginning of Section \ref{QPsection}. In the momentum space, $P_{\rm mod}$ captures the periodic structure of the interference pattern while $Q_{\rm d}$ is the ``which way'' measurement giving the information on the slit the particle ``has passed through''. Hence they form an appropriate pair of interferometric observables, as originally suggested in \cite{Aharonov} and further studied in \cite{P76, P85}.

\begin{proposition} The pair $(\sfq_{\rm d}, \sfp_{\rm mod})$ is complementary (where $\hA_0$ and $\hB_0$ are as in the preceding subsection).
\end{proposition}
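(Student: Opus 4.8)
The plan is to use that \emph{both} members of the pair are \emph{sharp}: $\sfq_{\rm d}$ is the spectral measure with $\sfq_{\rm d}(N)=\sfq(A_N)$, where for finite $N\subset\mathbb Z$ the set $A_N$ is the corresponding finite union of unit cells (equivalently, of translated copies of the single slit) -- a bounded Borel set -- and $\sfp_{\rm mod}$ is the spectral measure with $\sfp_{\rm mod}(Y)=\sfp\big(\bigcup_{k\in\mathbb Z}(Y+2\pi k)\big)$ for $Y$ a Borel subset of a momentum period interval $I_0$ of length $2\pi$. Thus, by Theorem \ref{seuraus1}(i), complementarity is equivalent to $\sfq_{\rm d}(N)\land\sfp_{\rm mod}(Y)=0$ for every finite proper $N\subsetneq\mathbb Z$ and every Borel $Y$ with $I_0\setminus Y$ of positive Lebesgue measure; and since $\sfq_{\rm d}(N)$ and $\sfp_{\rm mod}(Y)$ are projections, their meet is the orthogonal projection onto ${\rm ran}\,\sfq_{\rm d}(N)\cap{\rm ran}\,\sfp_{\rm mod}(Y)$, so it suffices to show this intersection is $\{0\}$.

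Next I would identify the two ranges. Any $\psi\in{\rm ran}\,\sfq_{\rm d}(N)=L^2(A_N)$ is supported in the bounded set $A_N$, hence $\psi\in L^1(\mathbb R)$. Any $\psi\in{\rm ran}\,\sfp_{\rm mod}(Y)$ satisfies $\widehat\psi=0$ a.e.\ on $S:=\bigcup_{k\in\mathbb Z}\big((I_0\setminus Y)+2\pi k\big)$, and $S$ has positive (indeed infinite) Lebesgue measure precisely because $I_0\setminus Y$ does. The crux is then the classical fact already behind \eqref{Lenard0}: a nonzero $\psi\in L^1(\mathbb R)$ of bounded support has $\widehat\psi$ equal to the restriction of a nonzero entire function, whose zero set is therefore discrete and in particular Lebesgue-null; so such a $\widehat\psi$ cannot vanish a.e.\ on the positive-measure set $S$ unless $\psi=0$. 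Hence ${\rm ran}\,\sfq_{\rm d}(N)\cap{\rm ran}\,\sfp_{\rm mod}(Y)=\{0\}$, which gives $\sfq_{\rm d}(N)\land\sfp_{\rm mod}(Y)=0$, and Theorem \ref{seuraus1}(i) completes the proof.

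I expect the only real subtlety -- the ``main obstacle'' -- to be the measure-theoretic point just used: one must not argue via supports, since $I_0\setminus Y$ (hence $S$) may have empty interior -- a fat-Cantor-type complement, exactly the phenomenon exploited in the proof of Lemma \ref{necessary1} -- so that ${\rm supp}(\chi_S)$ can be all of $\mathbb R$ and Lemma \ref{necessary} is too weak; it is real-analyticity of $\widehat\psi$ that rescues the argument. Finally, in parallel with \eqref{Lenard0}--\eqref{Lenard}, I would note that the pair is \emph{not} strongly complementary, because $\sfq_{\rm d}(N)^\perp=\sfq_{\rm d}(\mathbb Z\setminus N)$ has range consisting of functions supported on \emph{cofinitely} many cells -- not of bounded support -- so the entire-function argument fails for $\sfq_{\rm d}(N)^\perp\land\sfp_{\rm mod}(Y)$; one can also recognise $(\sfq_{\rm d},\sfp_{\rm mod})$, after the Bloch/Floquet decomposition of $L^2(\mathbb R)$ with respect to unit translations, as (up to multiplicity) the periodic position--momentum case $Z=\mathbb Z$ of Proposition \ref{proptorus}, the computation above being the corresponding case of its proof.
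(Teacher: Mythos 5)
Your proof is correct, but it takes a genuinely different route from the paper. The paper's proof is structural: it observes (following \cite{P85}) that the unitary groups generated by $Q_{\rm d}$ and $P_{\rm mod}$ satisfy the Weyl relations for the phase space $\mathbb T\times\mathbb Z$, invokes the Stone--von Neumann--Mackey theorem to identify the pair with a direct sum of copies of the circular position--number pair of Proposition \ref{proptorus} (with $Z=\mathbb Z$), and notes that complementarity is preserved under direct sums; the analytic input is thus the Hardy-function boundary-uniqueness theorem used in that proposition. You instead work directly on $L^2(\R)$: since both observables are sharp, the meet reduces to the intersection of ranges, and a nonzero $\psi$ supported in the bounded set $A_N$ has $\widehat\psi$ the restriction of a nonzero entire function, which cannot vanish a.e.\ on the positive-measure periodised set $S$. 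This is self-contained and elementary, it is exactly the mechanism behind \eqref{Lenard0}, and you correctly flag the one real subtlety (one must argue with Lebesgue measure, not supports, since $I_0\setminus Y$ may have empty interior, so Lemma \ref{necessary} is too weak). What each approach buys: the paper's argument is a one-line reduction once Proposition \ref{proptorus} is available and makes the group-theoretic origin of the complementarity transparent (your final sentence correctly identifies this as the Bloch/Floquet picture); yours avoids both the representation-theoretic machinery and the Hardy-space result, at the price of being tied to this particular realisation. Two minor caveats: the spectral projections of $Q_{\rm d}$ are onto the full unit cells $[-\frac12,\frac12]+n$ rather than translates of a proper slit $X$ (otherwise $\sfq_{\rm d}$ would not be normalised), though only boundedness of $A_N$ enters your argument; and your closing assertion that the pair is not strongly complementary is true but not proved by observing that the entire-function argument fails for $\sfq_{\rm d}(N)^\perp$ --- one should exhibit a nonzero element of the intersection, e.g.\ by imposing the finitely many linear conditions $\langle e_{yk}|\psi\rangle=0$ on the infinite-dimensional range of $\sfp_{\rm mod}(Y)$, or by passing to the fibers of the Bloch decomposition.
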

\begin{proof} As noted in \cite{P85}, one can easily check that the unitary groups generated by these operators satisfy the Weyl relations for the phase space $\mathbb T\times \mathbb Z$, and therefore by the Stone-von Neumann-Mackey theorem $(Q_{\rm d}, P_{\rm mod})$ is a direct sum of copies of the associated canonical pair whose complementarity was proved in Prop.\ \ref{proptorus}. Hence the claim follows, as it is clear from Theorem \ref{seuraus1} that complementarity is preserved when taking direct sums.
\end{proof}

\subsubsection{From commutativity to complementarity}
It is furthermore interesting to consider the transition from commutativity to complementarity due to the inclusion of the ``which path'' information. This can conveniently be done in the level of effects: consider first measuring the commutative effects $E=f(P_{\rm mod})$ and $F=g(Q_{\rm mod})$, where $f$ is taken to be continuous and vanishing exactly where $P_{\rm mod}$ does, i.e.\ at points $n2\pi$. For definiteness, we can take $f=f_0$ and $g=g_0$ as in \eqref{haversincom} considered in Section \ref{QPsection}. Here we can regard $E=f(P_{\rm mod})$ as an unsharp yes/no measurement regarding whether the value of $P_{\rm mod}$ is zero or not.

Now the compression of $F$ into $F'=\sfq([-\frac 12, \frac 12])F\sfq([-\frac 12, \frac 12])$  
 by the projection $\sfq([-\frac 12, \frac 12])$ 
onto the slit at the interval $[-\frac 12,\frac 12]$ can be interpreted as conditioning on the ``which path'' information that the particle ``passed through'' this specific slit, leading (up to Fourier-transform) to the pair \eqref{haversin}, which is indeed complementary even though the corresponding support subspaces have a nontrivial intersection.

\section{Complementarity and noise}\label{GettingAround}
Complementarity is an extreme form of incompatibility, and as such, one could expect that  
it would be unstable against the addition of noise. We first make some remark on the general aspects of this phenomenon, in the level of pairs of generic effects, and then proceed to consider the case of unsharp position and momentum observables with convolution type noise. Complementarity in the latter context was considered by Paul Busch in his  1984 paper, entitled ``On joint lower bounds of position and momentum observables in quantum mechanics'' \cite{P2}.

\subsection{Breaking complementarity of effects by noise}

The following simple results show how a small perturbation immediately regularises any effect $E$ so that its inverse becomes bounded.
\begin{proposition}\label{ex1} For any $E\in \mathcal{E(H)}$ and $\lambda,\,p\in (0,1)$, define two modified effects
\begin{align*}
E_{\lambda,p} &= (1-\lambda)E+\lambda p\id, & E_p=p(\id-E) +(1-p)E,
\end{align*}
corresponding to classical noise addition and convolution (see proposition below). Then ${\rm ran}\, E_{\lambda,p}^\frac 12={\rm ran}\, E_{p}^\frac 12=\mathcal H$, regardless of how small $p$ and $\lambda$ are.
\end{proposition}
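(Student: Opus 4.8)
The plan is to reduce both claims to a single criterion, already implicit in Remark~\ref{huomautus1}: for an effect $G\in\eh$ one has ${\rm ran}\,G^{\frac12}=\mathcal H$ if and only if $G$ is bounded below, i.e.\ $G\ge cI$ for some $c>0$. Indeed, such a bound forces $\ker G=\{0\}$ (so that $G_0=G$ and $\mathcal H_G=\mathcal H$) and also $0\notin\sigma(G)$, hence $0\notin\sigma_c(G_0)$, so the Hellinger--Toeplitz remark in Remark~\ref{huomautus1} yields ${\rm ran}\,G_0^{\frac12}=\mathcal H_G=\mathcal H$; conversely boundedness below of $G$ (equivalently of $G^{\frac12}$) is obviously necessary for the range of $G^{\frac12}$ to be all of $\mathcal H$. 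So it suffices to exhibit, for each of the two modified effects, an explicit strictly positive lower bound depending only on $\lambda$ and $p$.

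First I would handle $E_{\lambda,p}=(1-\lambda)E+\lambda pI$. Since $E\ge 0$ and $1-\lambda>0$, we get at once $E_{\lambda,p}\ge \lambda p\,I$, and $\lambda p>0$ because $\lambda,p\in(0,1)$; the criterion then gives ${\rm ran}\,E_{\lambda,p}^{\frac12}=\mathcal H$.

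Next, for $E_p=p(I-E)+(1-p)E$ I would first rewrite it as $E_p=pI+(1-2p)E$. The only mildly delicate point is that the coefficient $1-2p$ may have either sign, so instead of splitting into cases I would estimate the quadratic form directly: for any $\psi\in\mathcal H$, putting $t=\langle\psi|E\psi\rangle$, the inequality $0\le E\le I$ gives $t\in[0,\|\psi\|^2]$, and $\langle\psi|E_p\psi\rangle=p\|\psi\|^2+(1-2p)t$ is an affine function of $t$ on that interval, taking the endpoint values $p\|\psi\|^2$ and $(1-p)\|\psi\|^2$; hence $\langle\psi|E_p\psi\rangle\ge\min\{p,1-p\}\,\|\psi\|^2$. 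Thus $E_p\ge\min\{p,1-p\}\,I$ with $\min\{p,1-p\}>0$, and the criterion again gives ${\rm ran}\,E_p^{\frac12}=\mathcal H$.

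There is no real obstacle here; the only thing to watch is not to tacitly assume $p\le\tfrac12$ when dealing with $E_p$, which the quadratic-form bound sidesteps. (Alternatively one could invoke operator monotonicity of the square root to get $G^{\frac12}\ge\sqrt{c}\,I$ and hence boundedness of $G^{-\frac12}$, but the range characterisation from Remark~\ref{huomautus1} already packages exactly this.)
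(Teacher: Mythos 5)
Your proof is correct and follows essentially the same route as the paper: both arguments reduce the claim to showing $0\notin\sigma(E_{\lambda,p})$ and $0\notin\sigma(E_p)$ and then invoke the range characterisation from Remark~\ref{huomautus1}. The only (cosmetic) difference is that your uniform quadratic-form bound $E_p\ge\min\{p,1-p\}\,I$ replaces the paper's resolvent-set argument with its case split on $p\lessgtr\tfrac12$.
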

\begin{proof} Consider first $E_{\lambda, p}$. Since $\sigma((1-\lambda) E)\subset [0,1]$, it follows that $-\lambda p\in (-1,0)$ is in the resolvent set of $(1-\lambda) E$, and hence $E_{\lambda,p} = (1-\lambda) E -(-\lambda p) \id$ has a bounded inverse, that is, $0\notin \sigma(E_{\lambda,p})$. Regarding $E_p$, note that if $p\leq \frac 12$ then $\sigma((1-2p)E)\subset [0,1-2p]$ so $-p\in (-1,0)$ is in the resolvent set of $(1-2p)E$ and hence $E_p=(1-2p)E+p\id$ has a bounded inverse. Similarly, if $p>\frac 12$ then $\sigma((1-2p)E)\subset [1-2p,0]$ so $-p\in (-1,1-2p)$ is again in the resolvent set. Hence, in both cases $0\notin \sigma(E_p)$.
\end{proof}

\begin{corollary} Let $\Qo$ be a binary observable and $p\in (0,1)$. Define the convolution $$\Qo_p = \{p\Qo(0) +(1-p)\Qo(1), (1-p)\Qo(0)+p\Qo(1)\}$$ with outcomes $1$ and $0$, respectively. Then ${\rm ran} \sqrt {\Qo_p(0)} ={\rm ran} \sqrt {\Qo_p(1)}=\mathcal H$, and hence $(\Qo_p,\Qo')$ is not $(i,j)$-complementary for any binary observable $\Qo'$ and any $i,j=0,1$.
\end{corollary}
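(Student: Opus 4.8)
The plan is to reduce the entire statement to Proposition \ref{ex1}. Put $E=\Qo(1)\in\eh$, so that $\Qo(0)=\id-E$, and note that the two effects of $\Qo_p$ are precisely of the convolution form appearing there:
\begin{align*}
\Qo_p(1)&=p(\id-E)+(1-p)E=E_p, & \Qo_p(0)&=(1-p)(\id-E)+pE=E_{1-p},
\end{align*}
where $E_q:=q(\id-E)+(1-q)E$. Since $p\in(0,1)$ also forces $1-p\in(0,1)$, Proposition \ref{ex1} applies to each of $E_p$ and $E_{1-p}$ and gives ${\rm ran}\sqrt{\Qo_p(1)}={\rm ran}\, E_p^{\frac 12}=\mathcal H$ and ${\rm ran}\sqrt{\Qo_p(0)}={\rm ran}\, E_{1-p}^{\frac 12}=\mathcal H$, which is the first assertion.

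For the second assertion I would use the characterisation of complementarity in condition (ii) of Theorem \ref{seuraus1} (equivalently Corollary \ref{GenJauch}): $(i,j)$-complementarity of $\Qo_p$ and a binary observable $\Qo'$ would force ${\rm ran}\sqrt{\Qo_p(i)}\cap{\rm ran}\sqrt{\Qo'(j)}=\{0\}$. But the first factor is all of $\mathcal H$, so the intersection equals ${\rm ran}\sqrt{\Qo'(j)}$, which contains the nonzero subspace ${\rm ran}\,\Qo'(j)$ whenever $\Qo'(j)\neq 0$; hence it is nonzero and complementarity fails, for both $i=0$ and $i=1$ and either value of $j$. A more concrete variant of the same argument: the proof of Proposition \ref{ex1} actually yields a bounded inverse for $\Qo_p(i)$, so $\Qo_p(i)\geq c\,\id$ for some $c\in(0,1]$, and then $A:=c\,\Qo'(j)$ is a genuine common lower bound of $\Qo_p(i)$ and $\Qo'(j)$ that is nonzero as soon as $\Qo'(j)\neq 0$, so ${\rm l.b.}\{\Qo_p(i),\Qo'(j)\}\neq\{0\}$ and Corollary \ref{GenJauch} again rules out complementarity.

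There is no genuinely difficult step here; the result is essentially a repackaging of Proposition \ref{ex1}, and the only point worth flagging is the degenerate case $\Qo'(j)=0$. There ${\rm ran}\sqrt{\Qo'(j)}=\{0\}$ and the displayed intersection is trivially $\{0\}$, but in that case no test for that outcome of $\Qo'$ exists at all (condition (3) in the definition of a test cannot be satisfied), so $(i,j)$-complementarity holds only vacuously; the statement is to be read with $\Qo'(j)\neq 0$ understood. The residual work — checking the two displayed identities and that $p$ and $1-p$ plug correctly into Proposition \ref{ex1}, and unwinding the binary coarse-grainings in the definition of $(i,j)$-complementarity — is routine.
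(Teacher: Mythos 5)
Your proposal is correct and is exactly the argument the paper intends: the corollary is stated without proof as an immediate consequence of Proposition \ref{ex1}, via the identifications $\Qo_p(1)=E_p$ and $\Qo_p(0)=E_{1-p}$ with $E=\Qo(1)$, followed by the range criterion of Corollary \ref{GenJauch}/Theorem \ref{seuraus1}(ii). Your remark that the conclusion must be read with $\Qo'(j)\neq 0$ (since otherwise complementarity holds vacuously) is a fair and accurate observation about an implicit assumption in the statement.
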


In order to make a slightly more definitive statement, we let $\mathcal C \subset \mathcal{E(H)}\times \mathcal{E(H)}$ denote the set of pairs of 
complementary effects.
An immediate observation regarding the stability of complementarity is that in the finite-dimensional case complementary effects cannot have full rank, and hence complementarity can be destroyed by arbitrary small perturbations by trivial observables proportional to identity. The same result holds in the infinite-dimensional case; a precise formulation can be stated as follows:
\begin{proposition}\label{prop8}
 Equip $\mathcal{E(H)}$ with any induced vector space topology of $\mathcal{L(H)}$  and $\mathcal{E(H)}\times \mathcal{E(H)}$ with the corresponding cartesian product topology. Then the subset $\mathcal C$ has empty interior.
\end{proposition}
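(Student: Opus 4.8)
The plan is to prove the logically equivalent statement that the complement $(\eh\times\eh)\setminus\mathcal C$ is dense, i.e.\ that every pair of effects can be approximated, in the given product topology, by a pair that is \emph{not} complementary; a set has empty interior precisely when its complement is dense. Throughout I read ``$(E,F)$ complementary'' as ${\rm ran}\,E^{\frac12}\cap{\rm ran}\,F^{\frac12}=\{0\}$, equivalently ${\rm l.b.}\{E,F\}=\{0\}$, using Corollary \ref{GenJauch}. The only ingredient needed is the regularisation of Proposition \ref{ex1}: fixing $p\in(0,1)$, for any $E\in\eh$ the effect $E_{t}:=(1-t)E+tp\,\id$ has ${\rm ran}\,E_t^{\frac12}=\hi$ for every $t\in(0,1)$, so that ${\rm l.b.}\{E_t,G\}\ne\{0\}$ for every nonzero $G\in\eh$; in particular $E_t$ is complementary to no nonzero effect.

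Fix $p\in(0,1)$ and, given $(E,F)\in\eh\times\eh$, consider the maps $t\mapsto E_t=(1-t)E+tp\,\id$ and $t\mapsto F_t=(1-t)F+tp\,\id$. Each is affine in $t$, hence continuous from $\R$ into $\lh$ for \emph{any} vector space topology — the only facts invoked are continuity of addition and of scalar multiplication, so no norm, metric, or local convexity is used — and each takes values in $\eh$ for $t\in[0,1]$, since $0\le E_t\le(1-t+tp)\,\id\le\id$. Moreover $E_0=E$ and $F_0=F$. Consequently, if $W$ is any neighbourhood of $(E,F)$ in $\eh\times\eh$, it contains a product $U\times V$ of a neighbourhood $U$ of $E$ and a neighbourhood $V$ of $F$ in the subspace topology of $\eh$, and by continuity at $t=0$ there is $t_0\in(0,1)$ with $E_t\in U$ and $F_t\in V$, hence $(E_t,F_t)\in W$, for all $t\in(0,t_0)$.

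For such $t$, Proposition \ref{ex1} gives ${\rm ran}\,E_t^{\frac12}={\rm ran}\,F_t^{\frac12}=\hi$, so ${\rm ran}\,E_t^{\frac12}\cap{\rm ran}\,F_t^{\frac12}=\hi\ne\{0\}$, and Corollary \ref{GenJauch} yields ${\rm l.b.}\{E_t,F_t\}\ne\{0\}$, i.e.\ $(E_t,F_t)\notin\mathcal C$. Thus every nonempty open subset of $\eh\times\eh$ contains a non-complementary pair, so $\mathcal C$ has empty interior.

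I do not anticipate a genuine obstacle. The one point demanding care is that the conclusion must hold \emph{uniformly} over all induced vector space topologies at once; this is exactly why the argument is carried through the continuity of $t\mapsto E_t$ at $t=0$ (valid in every topological vector space) rather than through a norm estimate such as $\|E_t-E\|\le t$. The remaining verifications — that $E_t,F_t$ are honest effects, and that the underlying Hilbert space is nonzero so that $\hi\ne\{0\}$ — are immediate.
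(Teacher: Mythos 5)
Your proof is correct and follows essentially the same route as the paper: both perturb $(E,F)$ along the affine path $\lambda\mapsto((1-\lambda)E+\lambda c\,\id,(1-\lambda)F+\lambda c\,\id)$, invoke Proposition \ref{ex1} to conclude $\mathrm{ran}\,E_\lambda^{1/2}=\mathcal H$ so the perturbed pair is not in $\mathcal C$, and use continuity of the path at $\lambda=0$ (valid in any vector space topology) to conclude the complement of $\mathcal C$ is dense. The only cosmetic difference is that you mix in $p\,\id$ with $p\in(0,1)$ so as to quote Proposition \ref{ex1} literally, whereas the paper uses $\lambda\id$ directly; both work.
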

\begin{proof}  Let $E, F\in \mathcal E(\hi)$. Then for any $\lambda>0$ the effects $E_\lambda = (1-\lambda) E+\lambda \id$ and $F_\lambda = (1-\lambda) F+\lambda \id$ have ${\rm ran}\, \sqrt {E_\lambda} = {\rm ran}\, \sqrt {F_\lambda} =\mathcal H$ by Prop. \ref{ex1}, so $(E_\lambda, F_\lambda)\notin \mathcal C$. Since $[0,1]\ni \lambda \mapsto (E_\lambda, F_\lambda)\in \mathcal{E(H)}\times \mathcal{E(H)}$ is continuous, the claim follows.
\end{proof}
These results demonstrate that complementarity, like other forms of extreme incompatibility,  is not stable in \emph{arbitrary small} perturbations even in the infinite-dimensional case, and is instantly destroyed in mixtures with trivial observables. This is very different from incompatibility, in general,  which is typically preserved until some nontrivial noise threshold also in the finite-dimensional case.
 This is most evident in the case of qubit effects and observables. Indeed, any two (different) qubit effects are complementary if and only if they are of rank-1, that is,  weak atoms.  Moreover, two qubit observables are complementary if and only if they are sharp and clearly any two  sharp qubit observables are complementary. By contrast, for any two qubit effects $E=\frac 12(e_0I+\vec e\cdot\vec\sigma)$ and  $F=\frac 12(f_0I+\vec f\cdot\vec\sigma)$, and thus for the corresponding dichotomic observables $\sfe$ and $\sff$, 
 their compatibility can be  expressed in the form of a single inequality -- in fact, $E$ and $F$ are compatible exactly when
\begin{align*}
\ip{E}{E^\perp}\ip{F}{F^\perp}  - \,  &\sqrt{\ip{E}{E}\ip{F}{F}\ip{E^\perp}{E^\perp}\ip{F^\perp}{F^\perp} }\leq \ip{E}{F^\perp}\ip{E^\perp}{F} + \ip{E}{F}\ip{E^\perp}{F^\perp}
\end{align*}
where, for instance, $\ip{E}{F}=\frac 14(e_0f_0-\vec e\cdot\vec f)$ \cite[Theorem 3]{P70}; see \cite{P7} for the original proof in a special case. 
For an extensive study of the compatible approximators of the complementary sharp qubit (spin) observables, see \cite{P91,P93,P75,TJ_etal_2014, noiserob}.

\subsection{Generalised Jauch theorem}\label{GJT}
 We return to the context of the paper \cite{P2}, where Paul proved
`generalized Jauch theorem'  to answer the question how much unsharpness in the form of convolutions needs to be introduced into position and momentum in order  to break their complementarity \eqref{Lenard0}. 
 According to this theorem, our Corollary \ref{GenJauch},
for any pair of  unsharp position and momentum observables  $\mu*\sfq$ and  $\nu*\sfp$ and   for any of their value  sets $X$ and $Y$,
$$
{\rm l.b.}\{(\mu*\sfq)(X),(\nu*\sfp)(Y)\}\ne \{0\}
$$
if and only if 
$$
{\rm ran}\sqrt{(\mu*\sfq)(X)}\cap{\rm ran}\sqrt{(\mu*\sfq)(X)}\ne \{0\}.
$$
By Lemma \ref{necessary1} the support spaces $\hi_E$ and $\hi_F$ of the   effects  $E=(\mu*\sfq)(X)$ and $F=(\nu*\sfp)(Y)$ are contained in the subspaces
$\sfq({\rm supp}(\chi_X*\mu))(\hi)$ and $\sfp({\rm supp}(\chi_Y*\nu))(\hi)$, respectively, so that there are two obvious necessary conditions for the noncomplementarity of these effects:
\begin{align*}
&\sfq({\rm supp}(\chi_X*\mu))(\hi)\cap\sfp({\rm supp}(\chi_Y*\nu))(\hi)\ne \{0\},&\\
&\hi_E\cap\hi_F\ne \{0\}.&
\end{align*}
Clearly, if 
$\sfq({\rm supp}(\chi_X*\mu))\land\sfp({\rm supp}(\chi_Y*\nu))= 0$, and thus also $P_E\land P_F=0$, 
 then the effects $(\mu*\sfq)(X)$ and $(\nu*\sfp)(Y)$ remain complementary. It will be shown below that these implications cannot be reversed.

\begin{remark} From Lemma \ref{necessary1} we know that e.g. $\mathcal H_E$ could in principle be strictly contained in $\sfq({\rm supp}(\chi_X*\mu))(\hi)$, but we do not construct an example here -- in what follows we consider the case where $\mathcal H_E=\sfq({\rm supp}(\chi_X*\mu))(\hi)$.
\end{remark}
 
Before proceeding to the relevant result, we recall the following observation on the support of the involved convolutions: since  ${\rm supp}(\chi_X*\mu)\subset \overline{\overline X+{\rm supp}(\mu)}$ and ${\rm supp}(\chi_Y*\nu)\subset \overline{\overline Y+{\rm supp}(\nu)}$; from this we may conclude, along  with \cite{P2}, that if the measures $\mu$ and $\nu$ have bounded supports then the unsharp observable $\mu*\sfq$ and $\nu*\sfp$ are still complementary. The following remark explores the support question in a slightly more general context.

\begin{remark}
\rm
In this remark we let $G$ denote a (not necessarily abelian) locally compact group, using multiplicative notation in general but additive notation
in the abelian case (i.e., in (b)).
Let $M(G)$ be the space of regular complex Borel measures on $G$. We regard it as equipped
with the convolution product $(\mu,\nu)\mapsto \mu * \nu$ as in  \cite{HeRo63}.

(a) If $G$ is compact, then for any probability measures $\mu,\,\nu\in M(G)$ the support ${\rm supp}(\mu * \nu)$
equals ${\rm supp}(\mu){\rm supp}(\nu)$,  the set of the products $xy$ with $x\in{\rm supp (\mu)}$, $y\in{\rm supp} (\nu)$
 \cite[p.\ 925]{Wendel54}.

(b) If $G$ is not assumed to be compact, the claim in (a) need not hold, even if $G$ is abelian. To see this, let $G={\mathbb R}^2$,
let $\mu$ be the probability measure on $G$ supported by the $x$-axis and defined by the $N(0,1)$  Gaussian density on the $x$-axis, and let
$\nu=\sum_{n=1}^{\infty}2^{-n}\delta_{(n,n^{-1})}$. The support of the convolution $\mu * \nu$ contains the $x$-axis which, however, is not contained
in ${\rm supp}(\mu) + {\rm supp}(\nu)$.

(c) We claim that ${\rm supp}(\mu * \nu)$ is contained in the closure of ${\rm supp}(\mu){\rm supp}(\nu)$ for any probability measures $\mu,\,\nu\in M(G)$. We
only needed this result above in the abelian case, but the proof does not require commutativity. It is enough to show that
$\int_{G}f \, d(\mu * \nu)=0$ whenever $f:\,G\to [0,1]$ is a continuous function with compact support contained in the open complement of the closure of ${\rm supp}(\mu){\rm supp}(\nu)$ (see \cite[p.\ 123]{HeRo63}). For such an $f$,
$\int_{G}f\,d(\mu * \nu)=\int_G d\mu(x)\int_Gf(xy)d\nu(y)=\int_{{\rm supp}(\mu)} d\mu(x)\int_{{\rm supp}(\nu)}f(xy)d\nu(y)=0$, since $f(xy)=0$ whenever $x\in{\rm supp}(\mu)$ and $y\in{\rm supp}(\nu)$.
\end{remark}

The following application of Proposition \ref{Jukan_esimerkki1} shows that the necessary condition given above in terms of the supports of the convolving measures is not sufficient. 

\begin{proposition}\label{Jukan_esimerkki2} For any bounded intervals $X,\,Y\subset \mathbb R$ with lengths $d_X,$ $d_Y$ satisfying $d_Xd_Y\leq \pi/2$, there exist probability density functions $f,g$ with finite variance, such that the effects $f*\chi_{X}(Q)$ and $g*\chi_Y(P)$ are complementary, but $\sfq({\rm supp}(f*\chi_X))\land \sfp({\rm supp}(g*\chi_Y))\neq 0$.
\end{proposition}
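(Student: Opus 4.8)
The plan is to exhibit densities $f$ and $g$ for which the pair $(f*\chi_X)(Q)$, $(g*\chi_Y)(P)$ is an instance of the situation treated in Proposition~\ref{Jukan_esimerkki1}, while at the same time $f*\chi_X$ has full support, so that the support-projection condition is trivially violated. After translating $f$ and $g$ (equivalently, conjugating by a Weyl unitary, which by Theorem~\ref{seuraus1} preserves complementarity and also the meet $\sfq(\,\cdot\,)\wedge\sfp(\,\cdot\,)$) we may assume $X=[-d_X/2,d_X/2]$ and $Y=[-d_Y/2,d_Y/2]$ are centred at the origin. First I would fix the ``inverse period'': pick a small $\delta>0$, let $g$ be any probability density supported in $[-\delta,\delta]$ (automatically of finite variance) and set $R=d_Y/2+\delta$. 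Then $g*\chi_Y$ is continuous, $[0,1]$-valued, not identically zero, and supported in $[-R,R]$, and the hypothesis $d_Xd_Y\le\pi/2$ comfortably ensures, once $\delta$ is small enough, that $d_X<\pi/R$.

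Next I would choose $f$ so that $f*\chi_X$ satisfies the remaining hypotheses of Proposition~\ref{Jukan_esimerkki1} for this $R$, namely continuity, zero set exactly $\mathcal Z=\{n\pi/R\mid n\in\mathbb Z\}$, and non-integrability of $(f*\chi_X)^{-1}$ over every neighbourhood of every point of $\mathcal Z$. The key point is that $f\mapsto f*\chi_X$ is a moving average over windows of width $d_X$: $(f*\chi_X)(x_0)=\int_{x_0-d_X/2}^{x_0+d_X/2}f$ vanishes exactly when $f$ vanishes a.e.\ on that window. So I would put the ``zero windows'' at $W_n=[n\pi/R-d_X/2,\;n\pi/R+d_X/2]$; since $d_X<\pi/R$ these are pairwise disjoint, and the gaps $G_n$ between consecutive ones have common length $w=\pi/R-d_X>0$. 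Taking $f=\sum_{n\in\mathbb Z}\tfrac{a_n}{w}\,\chi_{G_n}$ with $a_n=c\,2^{-|n|}$ and $c$ normalised so that $\sum_na_n=1$, one gets a bounded probability density with $\int x^2f(x)\,dx\le\sum_n\big((|n|+1)\pi/R\big)^2a_n<\infty$, hence of finite variance; $(f*\chi_X)(x)=\int_{x-d_X/2}^{x+d_X/2}f$ depends continuously on $x$ and is bounded by $1$, so it is a continuous $[0,1]$-valued function; and since $f>0$ a.e.\ on each $G_n$, any width-$d_X$ window on which $f$ vanishes a.e.\ must lie in a single $W_n$ and hence coincide with it, so the zero set of $f*\chi_X$ is exactly $\mathcal Z$. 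Finally, a short computation with these rectangular bumps gives $(f*\chi_X)(n\pi/R+s)=(a_n/w)\,s$ and $(f*\chi_X)(n\pi/R-s)=(a_{n-1}/w)\,s$ for all sufficiently small $s>0$, so $(f*\chi_X)^{-1}$ has a logarithmically divergent integral on every neighbourhood of every point of $\mathcal Z$. An application of Proposition~\ref{Jukan_esimerkki1} -- with $g*\chi_Y$, $R$, and $f*\chi_X$ playing the roles of $g$, $R$, and $f$ there -- then gives that $(f*\chi_X)(Q)$ and $(g*\chi_Y)(P)$ are complementary.

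For the support statement: $f*\chi_X$ vanishes only on the discrete set $\mathcal Z$, so ${\rm supp}(f*\chi_X)=\mathbb R$ and $\sfq({\rm supp}(f*\chi_X))=I$; and $g*\chi_Y$, being continuous and not identically zero, is strictly positive on a nonempty open set, so $\sfp({\rm supp}(g*\chi_Y))\ne0$. Hence $\sfq({\rm supp}(f*\chi_X))\wedge\sfp({\rm supp}(g*\chi_Y))=\sfp({\rm supp}(g*\chi_Y))\ne0$, and translating back to the original $X$ and $Y$ completes the argument.

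The only step needing genuine care is the construction of $f$: one must place its ``holes'' $W_n$ in an arithmetic progression of step $\pi/R$ -- with $R$ the \emph{same} constant that bounds ${\rm supp}(g*\chi_Y)$ -- keep $f$ a normalised, finite-variance density, and make it vanish slowly enough at the edges of the $W_n$ that $(f*\chi_X)^{-1}$ remains non-integrable there. I do not expect any essential difficulty beyond this bookkeeping; the condition $d_Xd_Y\le\pi/2$ is precisely (indeed more than) what is needed so that, within each period $\pi/R$, the width-$d_X$ windows leave room for the gaps $G_n$ that carry $f$.
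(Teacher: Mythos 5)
Your proposal is correct and follows essentially the same route as the paper: build $f$ from weighted rectangular bumps so that $f*\chi_X$ has an equally spaced discrete zero set with linear vanishing (hence non-integrable reciprocal) and full support, choose $g$ so that $g*\chi_Y$ is compactly supported in $[-R,R]$ with $R$ matched to the zero spacing $\pi/R$, and invoke Proposition~\ref{Jukan_esimerkki1}. The only difference is bookkeeping -- the paper takes $g$ uniform on $Y$ and zero spacing $2d_X$ (forcing $R=\pi/(2d_X)\ge d_Y$, i.e.\ $d_Xd_Y\le\pi/2$), while you concentrate $g$ near a point and pack the zeros at spacing just above $d_X$, which as you note even relaxes the constant.
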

\begin{proof}
Choose
$f(x)= \sum_{n\in \mathbb Z} p_n \chi_{[-\frac 12,\frac 12]}(x-2n)$ 
where $p_n>0$ are such that $\sum_{n\in \mathbb Z} p_n =1$ and $\sum_{n\in \mathbb Z} n^2p_n<\infty$. (More general functions could be chosen.) Then $f$ is a probability density function with mean zero and finite variance.
Moreover, $f$ vanishes exactly on the periodic set $\cup_{n\in \mathbb Z} (2n+[\frac 12,\frac 32])$. Hence, if we take $X=[\frac12,\frac 32]$ then  $h_1(x) = (f*\chi_X)(x) =\sum_{n\in \mathbb Z} p_n f_0(x-2n)$,
where $f_0 = \chi_{[-\frac 12,\frac 12]}*\chi_{[\frac12,\frac 32]}$, that is, $f_0(x) =x$ for $x\in[0,1]$, $f_0(x)=2-x$ for $x\in[1,2]$ and zero otherwise. Now $h_1$ vanishes precisely in $2\mathbb Z$, that is, $\mathcal Z_1=2\mathbb Z$, and we also have ${\rm supp}(h_1)=\mathbb R$, so $\mathcal H_E = L^2({\rm supp}(h_1))=L^2(\mathbb R)$ in this case. Moreover, for each $n\in \mathbb Z$ we have $h_1(x) = |x-2n|$ whenever $|x-2n|<1$ and so $1/h_1(x)$ is not integrable over any open interval containing $2n$. Hence $h_1$ satisfies the conditions of Proposition \ref{Jukan_esimerkki1}. Now if we let $g=2\pi^{-1}\chi_Y$ where $Y=[-\pi/4, \pi/4]$, then $g$ is a probability density with mean zero and finite variance, and if we take $h_2=g*\chi_{Y}$, then ${\rm supp}(h_2)=[-\pi/2,\pi/2]$ and hence $\mathcal H_F = L^2(S_2) = L^2({\rm supp}(h_2))= L^2([-\pi/2,\pi/2])$. Then Proposition \ref{Jukan_esimerkki1} applies with $R=\pi/2$.

We can generalise this construction: if $X$ is any bounded interval of length $d_X$, a density function $f$ can be constructed as above to have zero set $\cup_{n\in \mathbb Z}(2nd_X+X)$, so that $f*\chi_X$ is zero exactly at the equidistant points $2d_X\mathbb Z$. Hence if $Y$ is a bounded interval centred at $0$ with length $d_Y$ and $g$ is the uniform distribution on $Y$ then $g * \chi_Y$ has support in $[-d_Y,d_Y]$ and hence any $\widehat \phi$ supported there is determined by the values $\phi(n\pi/R)$ if $R\geq d_Y$. If we adjust $R=\pi/(2d_X)$ to match this with $2nd_X$ and require $\phi\in {\rm ran} (f*\chi_X(Q))^{\frac 12}$ then we must have $\phi(n\pi/R)$ and hence $\phi=0$ as above. The required choice of $R$ is possible if $d_Xd_Y\leq \pi/2$.
\end{proof}

\begin{remark} The first part of the above proof shows that the mean and variance of $f$ and $g$ can be chosen to be zero if e.g.\ $X=[\frac 12,\frac 32]$ and $Y=[-\pi/4, \pi/4]$.
\end{remark}

\section{Summary}
We have reviewed and reconsidered the notion of complementarity, as advanced by Paul Busch and his colleagues over the years. We have clarified the relevant definition of complementarity of a pair of effects as nonexistence of a joint lower bound, by emphasising the operational interpretation of such a lower bound as a binary ``test'' observable. To define complementarity for observables, one then merely fixes a family of outcome sets for each observable; complementarity means that for any pair of sets from these respective families, the corresponding effects are complementary. We have presented several characterisations of complementarity, in terms of effect order, quantum operations implementing them, and their Naimark dilations, all appearing as consequences of an elementary lemma regarding factorisation of an effect in terms of a contraction, which is itself a reformulation of one of Paul's old results (as elaborated in \cite{BG1999}) on ``weak atoms'' of quantum effects. We have applied the characterisations to several cases, including position and momentum, position/momentum and energy, time and energy, number and phase, and spatial interferometry, which were all central to Paul's work. Regarding the noisy setting, we have discussed the complementarity as a form of extreme incompatibility of quantum observables, and finally considered the case of convolutions of position and momentum. In this context we have specifically focused on the complementarity of pairs of ``unsharp'' position and momentum effects where one of the functions is periodic and the other compactly supported, also settling an open question Paul posed in \cite{P2} as part of the work which originally initiated his study 
of circumventing complementarity and opening the scheme to develop a concept of unsharp reality.

\end{document}